\pgfplotsset{compat=1.10} 
\pgfplotsset{mystyle/.style={%
    width=4.8cm, 
    height=4.8cm,
    xlabel near ticks,
    ylabel near ticks,
    tick label style={font=\footnotesize},
    label style={font=\footnotesize},
	yticklabel style={%
		/pgf/number format/precision=3,
		/pgf/number format/fixed},
	every axis y label/.style={%
		at={(rel axis cs:-0.25,0.5)},			
		rotate=90,								
		font=\footnotesize},					
    legend style={font=\footnotesize,
                  draw=none,					
                  overlay,at={(-0.3567,-.45)},	
                  anchor=north west},			
    legend columns=-1,							
}}
\pgfplotsset{Ell3style/.style={%
    width=8cm, 
    height=4cm,
    xlabel near ticks,
    ylabel near ticks,
    tick label style={font=\footnotesize},
    label style={font=\footnotesize},
    log ticks with fixed point,
    legend style={font=\footnotesize,
                  draw=none,
                  overlay,at={(-0.2,-.4)},
                  anchor=north west},
    legend columns=-1,
}}
\pgfplotsset{myboxplotstyle/.style={%
	cycle list={%
		{blue,mark=+},
		{blue,mark=+},
		{blue,mark=+},
		{blue,mark=+},
		{blue,mark=+},
		{blue,mark=+},
		{blue,mark=+},
		{blue,mark=+},			
		{blue,mark=+},				
		},
	boxplot/draw direction=y,
	xtick={1,2,3,4,5,6,7,8,9},
	xticklabels={LDA, Ell1-QDA,Ell2-QDA,LW-QDA, Gau-QDA,  Ell1-LDA, Ell2-LDA, LW-LDA, Gau-LDA},
	x tick label style={font=\footnotesize, rotate=-35},
	y tick label style={font=\footnotesize},
	width=9.5cm,
	height=5.5cm,
	}
}
\pgfplotsset{mystyleportfolio/.style={%
    width=4.8cm, 
    height=4.8cm,
    xlabel near ticks,
    ylabel near ticks,
    tick label style={font=\footnotesize},
    label style={font=\footnotesize},
	yticklabel style={%
		/pgf/number format/precision=3,
		/pgf/number format/fixed},
	every axis y label/.style={%
		at={(rel axis cs:-0.27,0.5)},			
		rotate=90,								
		font=\footnotesize},					
    legend style={font=\footnotesize,
                  draw=none,					
                  overlay,at={(-0.3567,-.5)},	
                  anchor=north west},			
    legend columns=-1,							
}}
\newcommand{\PDHm}[1]{\mathbb{S}_{++}^{#1\times#1}}
\newcommand{\pdim}{p}
\newcommand{\ndim}{n}
\newcommand{\M}{\bom \Sigma} 
\newcommand{\E}{\mathbb{E}}                  
\newcommand{\bo}[1]{\mathbf{#1}}              
\newcommand{\bom}[1]{\boldsymbol{#1}}  
\newcommand{\be}{\beta}
\newcommand{\iidsim}{\overset{iid}{\sim}}
\newcommand{\al}{\alpha}  
\newcommand{\gam}{\gamma}  
\newcommand{\beq}{\begin{equation}}
\newcommand{\eeq}{\end{equation}}
\newcommand{\bmat}{\begin{pmatrix}}
\newcommand{\emat}{\end{pmatrix}}
\newcommand{\etatwo}{\vartheta}
\newcommand{\bmu}{\bom \mu}
\renewcommand{\S}{{\bf S}}
\newcommand{\I}{\bo I} 
\renewcommand{\dim}{\pdim}
\newcommand{\R}{\mathbb{R}}
\newcommand{\X}{{\bf X}}
\newcommand{\Z}{{\bf Z}}
\newcommand{\A}{{\bf A}}
\newcommand{\B}{{\bf B}}
\newcommand{\commat}{\bo K_\pdim}  
\newcommand{\ve}{\mathrm{vec}} 
\renewcommand{\H}{{\bf H}}
\newcommand{\x}{{\bf x}}
\renewcommand{\u}{{\bf u}}
\renewcommand{\a}{{\bf a}}
\newcommand{\z}{{\bf z}} 
\renewcommand{\v}{{\bf v}}
\newcommand{\e}{{\bf e}}
\newcommand{\w}{{\bf w}}
\newcommand{\Fr}{\mathrm{F}}
\newcommand{\MSE}{\mathrm{MSE}}
\newcommand{\ka}{\kappa}
\newcommand{\kdim}{K}
\newcommand{\tr}{\mathrm{tr}}
\DeclareMathOperator{\Tr}{tr}
\DeclareMathOperator{\cov}{cov} 
\DeclareMathOperator{\var}{var} 
\newcommand{\Mor}{\bo S}
\newcommand\veco[1]{\mathrm{vec}\left({#1}\right)}
\newcommand\numberthis{\addtocounter{equation}{1}\tag{\theequation}}
\newcounter{ctheorem}
\newtheorem{theorem}[ctheorem]{Theorem}
\newcounter{clemma}
\newtheorem{lemma}[clemma]{Lemma}
\begin{document}
%
%
\title{Optimal shrinkage covariance matrix estimation under random sampling from elliptical distributions}

%
%
%

\author{Esa~Ollila,~\IEEEmembership{Member,~IEEE,}~and~Elias~Raninen,~\IEEEmembership{Student Member,~IEEE}
\thanks{E. Ollila and E. Raninen are with the Department of Signal
	Processing and Acoustics, Aalto University, Espoo, P.O. Box 13000,
	FIN-00076  Aalto; e-mail: \{esa.ollila,elias.raninen\}@aalto.fi 
(see http://users.spa.aalto.fi/esollila/), The research was supported by the Academy of Finland grant no. 298118 which
 is gratefully acknowledged}
\thanks{Manuscript received XX/20XX}}

%
%

\markboth{Submitted to IEEE TRANSACTIONS ON SIGNAL PROCESSING}{E. Ollila and E. Raninen}
%



\maketitle

\begin{abstract}
	This paper considers the problem of estimating a high-dimensional (HD)
	covariance matrix when the sample size is smaller, or not much larger,
	than the dimensionality of the data, which could potentially be very
	large. We develop a regularized sample covariance matrix (RSCM)
	estimator which can be applied in commonly occurring sparse data
	problems. The proposed RSCM estimator is based on estimators of the
	unknown optimal (oracle) shrinkage parameters that yield the minimum
	mean squared error (MMSE) between the RSCM and the true covariance
	matrix when the data is sampled from an unspecified elliptically
	symmetric distribution.  We propose two variants of the RSCM estimator
	which differ in the approach in which they estimate the underlying
	sphericity parameter involved in the theoretical optimal shrinkage
	parameter.  The performance of the  proposed RSCM estimators are
	evaluated with numerical simulation studies. In particular when the
	sample sizes are low, the proposed RSCM estimators often show a
	significant improvement over the conventional RSCM estimator by Ledoit
	and Wolf (2004). We further evaluate the performance of the proposed
	estimators in classification and portfolio optimization problems with
	real data wherein the proposed methods are able to outperform the
	benchmark methods. 
\end{abstract}

\begin{IEEEkeywords}
Sample covariance matrix, shrinkage estimation, regularization, elliptical
distribution
\end{IEEEkeywords}

%
\IEEEpeerreviewmaketitle

%
%
%
%

\section{Introduction}\label{sec:intro}
\IEEEPARstart{E}{stimating} high-dimensional covariance matrices where the
sample size $\ndim$ is smaller, or not much larger than the dimension
$\pdim$ of the samples, is a problem that has attracted significant research
interest in the recent years~\cite{ledoit2002some,ledoit_wolf:2004,abramovich2007diagonally,ollila2014regularized,zhang2016automatic,sun2014regularized,chen2010shrinkage,couillet2014large,pascal2014generalized}.
This is due to the fact that high-dimensional data analysis problems have
become increasingly common in a wide spectrum of fields, such as in
finance~\cite{ledoit_wolf:2004},  bioinformatics, and
classification~\cite{friedman1989regularized}.

We consider the problem of estimating the high-dimensional covariance matrix
based on a sample  $\x_1,\ldots, \x_\ndim$ of independent and identically
distributed (i.i.d.) random vectors. The observations are assumed to be
generated from an unspecified $\pdim$-variate distribution $\x \sim F$ with
a mean vector $\bmu=\E[\x]$ and a $\pdim \times \pdim$ positive definite
covariance matrix
\[
	\M = \E \big[(\x - \bmu)(\x -\bmu)^\top \big] \in \PDHm{\pdim}.
\]
The most commonly used estimators of the unknown parameters $(\bmu, \M)
\in \R^\pdim \times \PDHm{\pdim}$ are the \emph{sample mean vector} and the
\emph{sample covariance matrix} (SCM),
\begin{align*}
	\bar \x &= \frac{1}{\ndim}\sum_{i=1}^\ndim \x_i , \\
	\S &= \frac{1}{\ndim-1}\sum_{i=1}^\ndim
	(\x_i- \bar \x) (\x_i - \bar \x)^\top.
\end{align*}
They have desirable properties, such as being the sufficient statistics and
unbiased. However, the SCM does not perform well in high-dimensional
problems for several reasons. Foremost, significant estimation errors result
from having an insufficient number of samples. Moreover, if $\pdim > \ndim$,
the SCM is always singular, i.e., it is not invertible even though the true
covariance matrix is known to be positive definite and hence non-singular.
In these situations, a frequently used approach for improving the estimation
accuracy is to use shrinkage regularization.

One of the most commonly used estimators in low sample support problems,
where $\pdim$ is large compared to the sample size $\ndim$, is the
\emph{regularized} SCM (RSCM) of the form
\begin{equation} \label{eq:regSCM}
	\S_{\al, \be} = \be \S + \al \I,
\end{equation}
where $\al, \be > 0$ denote the \emph{shrinkage  parameters} or
\emph{regularization parameters}. In signal processing, an estimator of the
form \eqref{eq:regSCM} was proposed in
\cite{carlson1988covariance,abramovich1981controlled} and is often referred
to as the diagonal loading estimator. Another line of research has been to
consider robust regularized covariance matrix estimators, e.g.,
\cite{abramovich2007diagonally,ledoit_wolf:2004,ollila2014regularized,zhang2016automatic,sun2014regularized,chen2010shrinkage,couillet2014large},
and~\cite{pascal2014generalized}.
In this paper, the focus is on determining the optimal (in MSE sense)
shrinkage parameters for the RSCM. 

We define the optimal RSCM estimator as the one that is based on the
\emph{oracle shrinkage parameters} minimizing the mean squared error (MSE),
that is,
\begin{equation} \label{eq:RegSCM_oracle}
    (\al_o,\be_o) 
    = \underset{\al,\be >0} {\arg \min}~\E \Big[ \big\|   \S_{\al,\be}  - \M \big \|^2_\Fr \Big],
\end{equation}
where $\| \cdot \|_\Fr$ denotes the Frobenius matrix norm,  i.e.,
$\|\A\|_\Fr^2 = \tr(\A^\top \A)=\tr(\A \A^\top)$ for any matrix $\A$. We use
the prefix \emph{oracle} for the shrinkage parameters $(\al_o,\be_o)$ as
they depend on the true unknown covariance matrix $\M$. Although, the oracle
shrinkage parameters cannot be used in practice, they have the theoretical
significance for being a benchmark for best possible performance w.r.t.\ the
MSE metric. 

The widely popular Ledoit-Wolf (LW-)RSCM~\cite{ledoit_wolf:2004} estimator
is based on the consistent estimators $(\hat \al_o^{\textup{LW}}, \hat
\be_o^{\textup{LW}})$ of the oracle parameters $(\al_o,\be_o)$ under the
random matrix theory (RMT) regime, i.e.,  as $\ndim,\pdim \to \infty$, we
have
	\begin{equation}\label{eq:RMTregime}
	c = \pdim/\ndim \to c_0, \tag{R1}
\end{equation}
It is, however, possible to improve upon the LW-estimator and obtain a more
accurate finite sample estimation performance by assuming that the
observations are generated from a specific $\pdim$-variate distribution,
e.g., the multivariate normal (MVN) distribution.  For example, in
\cite[Theorem~1]{chen2010shrinkage}, the authors derived an optimal
shrinkage parameter assuming that the samples have a Gaussian distribution
with a known location ($\bom \mu$).  Such a strict assumption on
distribution of the data however implies performance loss if the assumption
does not hold.  Another somewhat related approach has been taken for example
in \cite[Proposition~3]{auguin2016robust}, where the authors considered
robust $M$-estimators and looked for an asymptotically optimal shrinkage
parameter in the RMT regime which minimizes the squared Frobenius distance
between normalized regularized $M$-estimators of scatter matrix and a
normalized covariance matrix.

In this paper, we instead assume that the observations are from an
unspecified elliptically symmetric (ES) distribution and derive estimators
of the optimal oracle shrinkage parameters $(\al_o,\be_o)$ that are able to
perform reliably under the RMT regime. ES distributions is a large class of
distributions comprising, e.g., the MVN distribution, generalized Gaussian,
and all compound Gaussian distributions as special cases, see
e.g.,~\cite{muirhead:1982,fang1990symmetric}, and~\cite{ollila2012complex}.

The rest of this paper is organized as follows. In Section~\ref{sec:optim1}
and Section~\ref{sec:optim2}, we derive the optimal shrinkage parameters
$(\al_o,\be_o)$ under the general assumption of sampling from any
$\pdim$-variate distribution and an elliptical distribution with finite
fourth order moments, respectively. In Section~\ref{sec:estim}, we develop 
estimators of $(\al_o,\be_o)$ under the RMT regime and when sampling from an
unspecified elliptically symmetric distribution.
In Section~\ref{sec:simul}, we conduct several simulation studies and
compare the proposed estimators with the popular LW-RSCM estimator. In
Section~\ref{sec:real_data}, we illustrate the performance of the proposed
estimators in two applications. First, the proposed methods are used in a
regularized discriminant analysis (RDA) framework, where they are applied to
the classification of phoneme data. Then the methods are used in a portfolio
optimization problem using the Global Minimum Variance Portfolio (GMVP)
framework, where we use real data of historical (daily) stock returns from
the Hong Kong's Hang Seng Index (HSI) and  Standard and Poor's 500 (SP500)
index.  In both applications the proposed methods are shown to perform
better than the benchmark methods. Finally, Section~\ref{sec:conclusions}
concludes.

\emph{Notation:} We denote the open cone of $p \times p$ positive definite 
symmetric matrices by $\PDHm{\pdim}$. The \emph{vectorization} of an $n
\times p$ matrix $\A = (\a_1, \ldots \a_p)$ is denoted by $\ve(\A) =
{(\a_1^\top, \ldots, \a_p^\top)}^\top$. 
The \emph{matrix trace} of a square matrix $\A$ is denoted by $\tr(\A)$. The
\emph{Kronecker product} $\A \otimes \B$ of any matrices $\A$ and $\B$ is a
block matrix with its $ij$th block being equal to $a_{ij}\B$.  Kronecker
product has the useful property: $(\A \otimes \B)(\mathbf{C} \otimes
\mathbf{D}) = (\A\mathbf{C} \otimes \B\mathbf{D})$ for the matrices $\A$,
$\B$, $\mathbf{C}$, and $\mathbf{D}$ of appropriate dimensions. We denote
the \emph{identity matrix} of proper dimension by $\I$ and the
\emph{centering matrix} of proper dimension by $\H = \I -
\boldsymbol{1}\boldsymbol{1}^\top/n$, where $\boldsymbol{1}$ is a vector of
ones. The \emph{canonical basis vector}, which has its $i$th element equal
to $1$ and all other elements zero is denoted by $\e_i$. The
\emph{commutation matrix} $\commat$ is a $\dim^2 \times \dim^2$ block matrix
with its $ij$th block equal to a $\dim \times \dim$ matrix that has a $1$ at
element $ji$ and zeros elsewhere, i.e., $\commat = \sum_{i,j} \e_i \e_j^\top
\otimes \e_j \e_i^\top$. It also has the following important
properties~\cite{magnus_neudecker:1999}: $\commat \ve(\A)= \ve(\A^\top)$ and
$\commat (\A \otimes \B) \commat = (\B \otimes \A)$ for any $\pdim \times
\pdim$ matrices $\A$ and $\B$. In our developments, we will also use the
following identities: $\tr(\A \otimes \B) = \tr(\A) \tr(\B)$, $\tr( \ve(\A)
\ve(\B)^\top ) = \tr(\A^\top \B) = \ve(\B)^\top \ve(\A)$ for any square
matrices $\A$ and $\B$ of same order. Notation "$=_d$" reads "has the same
distribution as". 

\section{Optimal oracle shrinkage parameters}\label{sec:optim1}
In this section, we derive the oracle shrinkage parameters $(\alpha_o,
\beta_o)$ for any $p$-variate distribution. First, we define the 
\emph{scale} and  \emph{sphericity} parameters of $\M \in \PDHm{\pdim}$ as 
\begin{equation} \label{eq:scale}
	\eta = \frac{\tr(\M)}{\pdim} \quad \text{and} \quad \gamma =
	\frac{\pdim \tr(\M^2)}{\tr(\M)^2}.
\end{equation}
Note that  $\eta$ equals the mean of the eigenvalues  of $\M$ whereas $\gam$
is equal to the ratio of the mean of the squared eigenvalues relative to
mean of eigenvalues squared. The sphericity $\gamma$
\cite{ledoit2002some,srivastava2005some} measures how close the covariance matrix is to a
scaled identity matrix. Furthermore, the values for the sphericity 
are in the range $1 \leq \gamma \leq \pdim$. This can be seen by applying
the Cauchy-Schwartz inequality:
\[
	\tr(\M) ^2
	= \Bigg( \sum_{i=1}^\pdim \lambda_i \cdot 1 \Bigg)^2
	\leq \pdim \cdot \sum_{i=1}^\pdim \lambda_i^2
	= \pdim \tr(\M^2).
\]
By dividing the right-hand side of the equation by the left-hand side, we
have $\gamma \geq 1$ with equality if and only if $\M= c \I$ for some $c>0$.
Furthermore, the upper bound $\gamma = \pdim$ is achieved for rank one
matrices, in which case $\M$  has only one non-zero eigenvalue.

The scale and sphericity, $\eta$ and $\gamma$, are elemental in
our developments. As is shown in Theorem~\ref{th:beta0_ELL}, the optimal
shrinkage parameter pair $(\al_o,\be_o)$ for a given elliptical distribution
depends on the true covariance matrix $\M$ only through $\eta$ and $\gamma$.
Simple plug-in estimates of $(\al_o,\be_o)$ can then be obtained by 
replacing $(\eta, \gamma)$ with their estimates. If the elliptical
distribution is unknown an additional elliptical kurtosis parameter needs to
be estimated.

The next theorem provides the expressions for the oracle shrinkage
parameters in the case of sampling from an unspecified $\pdim$-variate
distribution with finite fourth order moments.  Write $\MSE(\S) = \E \Big[
	\big\|   \S - \M \big \|^2_\Fr \Big]$ for the mean squared
error (MSE) and 
\[
	\mathrm{NMSE}(\S) = \frac{\E \big[ \| \S - \M \|_\Fr^2 \big]}
							 {\| \M \|^2_\Fr}.
\]
for the \emph{normalized MSE}. 

\begin{theorem} \label{th:oracle_LW}
Let $\x_1, \ldots, \x_n$ denote an i.i.d.\ random sample from any
$\pdim$-variate distribution with finite fourth order moments, mean vector
$\bmu$, and covariance matrix $\M$. Then, the oracle shrinkage parameters
in~\eqref{eq:RegSCM_oracle} are
\begin{align}
	\be_o
	&= \frac{\pdim (\gamma-1) \eta^2}{\E \big[\tr \big(\S^2\big) \big]
		- \pdim \eta^2} \label{eq:beta0ver2} \\ 
	&= \dfrac{(\gamma-1)}{(\gamma-1) + \gamma \cdot \mathrm{NMSE}(\S)} \label{eq:beta0ver2b}
\end{align}
and 
\begin{align} \label{eq:alpha0}
	\al_o = (1 - \be_o) \eta,
\end{align}
where $\eta$ and $\gamma$ are defined in~\eqref{eq:scale}. Furthermore, the
optimal $\be_o$ is always in the range $[0,1)$ and the value of the MSE
at the optimum is
\begin{equation} \label{eq:MSEopt}
    \MSE(\S_{\al_o,\be_o}) 
    =  ( 1- \be_o) \| \M - \eta \I \|^2_{\Fr}.
\end{equation}
\end{theorem}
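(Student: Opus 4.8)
The plan is to treat the objective as a quadratic function of the pair $(\al,\be)$ and minimize it in closed form. First I would expand
\[
	\MSE(\S_{\al,\be}) = \E\big[\tr\big((\be\S + \al\I - \M)^2\big)\big]
\]
into the six trace terms obtained by squaring a sum of three symmetric matrices. The simplifications come from the unbiasedness of the SCM, $\E[\S]=\M$, which gives $\E[\tr(\S)]=\tr(\M)=\pdim\eta$ and $\E[\tr(\S\M)]=\tr(\M^2)=\gam\pdim\eta^2$, where I have used the definitions in~\eqref{eq:scale} to write $\tr(\M)=\pdim\eta$ and $\tr(\M^2)=\gam\pdim\eta^2$. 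Taking expectations then reduces the objective to the scalar quadratic
\begin{align*}
	\MSE(\S_{\al,\be}) &= \be^2\,\E\big[\tr(\S^2)\big] + \pdim\,\al^2 + \gam\pdim\eta^2 \\
	&\quad + 2\pdim\eta\,\al\be - 2\gam\pdim\eta^2\,\be - 2\pdim\eta\,\al ,
\end{align*}
whose coefficients involve only $\E[\tr(\S^2)]$, $\eta$, $\gam$, and $\pdim$.

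Next I would set the two partial derivatives to zero. The $\al$-equation immediately yields the linear relation $\al=(1-\be)\eta$, which is~\eqref{eq:alpha0}. Substituting this into the $\be$-equation gives a single scalar linear equation whose solution is~\eqref{eq:beta0ver2}. I would then confirm that this stationary point is the global minimizer by noting that the Hessian has diagonal entries $2\pdim$ and $2\E[\tr(\S^2)]$ and off-diagonal entry $2\pdim\eta$, so positive definiteness reduces to the single inequality $\E[\tr(\S^2)]>\pdim\eta^2$; this holds because $\E[\tr(\S^2)]=\tr(\M^2)+\E[\|\S-\M\|_\Fr^2]\ge\gam\pdim\eta^2\ge\pdim\eta^2$.

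To pass to the equivalent NMSE form~\eqref{eq:beta0ver2b}, I would rewrite the denominator of~\eqref{eq:beta0ver2} using $\E[\|\S-\M\|_\Fr^2]=\E[\tr(\S^2)]-\tr(\M^2)$ (again by unbiasedness) together with $\|\M\|_\Fr^2=\tr(\M^2)=\gam\pdim\eta^2$, obtaining $\E[\tr(\S^2)]-\pdim\eta^2=\pdim\eta^2\big[(\gam-1)+\gam\,\mathrm{NMSE}(\S)\big]$; dividing numerator and denominator by $\pdim\eta^2$ produces~\eqref{eq:beta0ver2b}. The range $\be_o\in[0,1)$ is then immediate, since $\gam\ge 1$ and $\mathrm{NMSE}(\S)\ge 0$ make the numerator nonnegative and strictly smaller than the strictly positive denominator.

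The cleanest route to~\eqref{eq:MSEopt}, and the step requiring the most care, is to avoid plugging $\be_o$ into the full quadratic and instead use the decomposition
\[
	\S_{\al_o,\be_o} - \M = \be_o(\S-\M) - (1-\be_o)(\M-\eta\I),
\]
which is valid precisely because $\al_o=(1-\be_o)\eta$. Expanding $\|\cdot\|_\Fr^2$ and taking expectations, the cross term $\E[\tr((\S-\M)(\M-\eta\I))]$ vanishes by $\E[\S]=\M$, leaving $\be_o^2\,\E[\|\S-\M\|_\Fr^2]+(1-\be_o)^2\|\M-\eta\I\|_\Fr^2$. Using $\|\M-\eta\I\|_\Fr^2=\pdim\eta^2(\gam-1)$ and the NMSE relation above, a one-line identity equivalent to the stationarity condition, namely $\be_o\gam\,\mathrm{NMSE}(\S)=(1-\be_o)(\gam-1)$, collapses this to $(1-\be_o)\|\M-\eta\I\|_\Fr^2$. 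The main obstacle throughout is bookkeeping rather than anything conceptual: applying the unbiasedness identities correctly, and handling the boundary case $\gam=1$, where $\be_o=0$ lies on the edge of the open cone $\al,\be>0$ yet the infimum is still attained.
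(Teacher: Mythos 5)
Your proof is correct, and it takes a more self-contained route than the paper's. The paper does not re-derive the minimizer at all: it invokes Theorem~2.1 of Ledoit and Wolf~\cite{ledoit_wolf:2004} for the representation $\be_o = \|\M-\eta\I\|_\Fr^2\big/\big(\|\M-\eta\I\|_\Fr^2+\MSE(\S)\big)$ together with $\al_o=(1-\be_o)\eta$ (noting that the cited centered-case argument transfers because it only uses $\E[\S]=\M$), and then performs essentially the algebra you describe --- $\MSE(\S)=\E[\tr(\S^2)]-\tr(\M^2)$ and $\|\M-\eta\I\|_\Fr^2=\pdim(\gam-1)\eta^2$ --- to reach \eqref{eq:beta0ver2} and \eqref{eq:beta0ver2b}. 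You instead minimize the quadratic in $(\al,\be)$ directly from first-order conditions, which buys independence from the external citation and makes global optimality explicit via the Hessian check, at the cost of more bookkeeping; the paper's route is shorter and leans on the literature. For \eqref{eq:MSEopt} the two arguments coincide: the paper's bias--variance decomposition $\MSE(\S_{\al,\be})=\be^2\MSE(\S)+\|\al\I-(1-\be)\M\|_\Fr^2$, specialized to $\al=(1-\be)\eta$, is exactly your identity $\S_{\al_o,\be_o}-\M=\be_o(\S-\M)-(1-\be_o)(\M-\eta\I)$, and both collapse via the stationarity relation $\be_o\,\MSE(\S)=(1-\be_o)\|\M-\eta\I\|_\Fr^2$. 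One small strictness point: your chain $\E[\tr(\S^2)]\ge\gam\pdim\eta^2\ge\pdim\eta^2$ as written gives only a weak inequality, so positive definiteness of the Hessian (and the strict bound $\be_o<1$) additionally requires $\MSE(\S)>0$, i.e., that $\S=\M$ does not hold almost surely --- the same caveat the paper disposes of by remarking that $\S=\M$ has probability zero under continuous sampling. Your flag on the boundary case $\gam=1$, where $\be_o=0$ lies outside the open constraint set $\al,\be>0$ so the infimum is not attained there, is a legitimate subtlety that the paper passes over silently.
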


\begin{proof}
It was shown in~\cite[Theorem~2.1]{ledoit_wolf:2004} that
\begin{align} \label{eq:beta0ver2_apu}
	\be_o &=
	\frac{\| \M - \eta \I \|^2_\Fr}{\| \M - \eta \I \|^2_\Fr
	+ \MSE(\S)}
\end{align}
and $\al_o = (1 - \be_o) \eta$. Although, the result was shown assuming
$\bmu=\bo 0$ is known and for $\S_{0} = (1/n) \sum_{i=1}^\ndim \x_i
\x_i^\top$, this result transfers to the non-centered case as the derivation
only assumes that $\E[\S_0]=\M$ which applies to $\S$ as well. Note that,
$\be_o = 1$ only if $\S = \M$, and thus $\MSE(\S)=0$, which has zero
probability when sampling from a continuous distribution. The form of
$\be_o$ in~\eqref{eq:beta0ver2_apu} therefore implies that $\be_o \in
[0,1)$. We now show that~\eqref{eq:beta0ver2_apu} can be expressed in the
	form~\eqref{eq:beta0ver2}. 
	
First, we write
\begin{align*}
	a_1
	&= \MSE(\S) 
	= \E \big[ \| \Mor - \M \|^2_{\Fr} \big] \\
	&= \E \big[\tr  \big(\Mor^2\big) \big] 	- 2 \E \big[ \tr \big(\Mor \M\big)\big] + \tr\big(\M^2\big) \\
	&= \E \big[\tr \big(\Mor^2\big) \big]
	- \tr(\M^2),
	\numberthis
	\label{eq:a1}
\end{align*}
where we used that $\E[\tr(\Mor \M)] = \Tr(\E[\Mor]\M) = \tr(\M^2)$.
The numerator of $\be_o$ in~\eqref{eq:beta0ver2_apu} is
\begin{align}
	a_2 &= \| \M - \eta \I \|^2_\Fr
	= \tr(\M^2) - (1/\pdim) \tr(\M)^2 \notag \\
	&= \pdim( \etatwo - \eta^2) = \pdim(\gamma-1) \eta^2,
	\label{eq:a2}
\end{align}
where we denote $\etatwo = \tr(\M^2)/\pdim$. 
This shows that the denominator of $\be_o$ is
$a_1 + a_2
= \E \big[\tr\! \big(\Mor^2\big) \big] - (1/\pdim) \tr(\M)^2
= \E \big[\tr\! \big(\Mor^2\big) \big] - \pdim \eta^2$.
These expressions for the numerator and the denominator of $\beta_o$ yield
the assertion~\eqref{eq:beta0ver2} for $\be_o$.
Substituting~\eqref{eq:a2} into~\eqref{eq:beta0ver2_apu} and multiplying
    both the numerator and denominator by $1/(p\eta^2)$ gives~\eqref{eq:beta0ver2b}.

Next, we derive the expression for the MSE of the RSCM $\S_{\al,\be}$. 
By using the variance and bias decomposition of the MSE, we have
\begin{align*}
	\MSE(\S_{\al,\be})
    &= \tr(\var(\veco{\S_{\al,\be}})) + \|\E[\S_{\al,\be}] -
    \M \|_\Fr^2
    \\
    &= \beta^2\tr(\var(\veco{\S})) + \|\alpha \I - (1-\beta) \M \|_\Fr^2
    \\
    &= \beta^2 \MSE(\S) + \|\alpha \I - (1-\beta) \M \|_\Fr^2.
\end{align*}
	We used the fact that from the unbiasedness of $\S$ it  follows that
	$\MSE(\S) = \tr(\var(\veco{\S})) = a_1$. At the optimum, we have $\be_o
	a_1 = (1-\be_o) a_2$, which can be seen from~\eqref{eq:beta0ver2_apu},
	and $\al_o = (1-\be_o)\eta$. The MSE at the optimum is therefore
\begin{align*}
    \MSE(\S_{(1-\be_o)\eta,\be_o})
    &= \be_o^2 \MSE(\S) + (1-\be_o)^2\| \M - \eta\I\|_\Fr^2
    \\
    &= \be_o(1-\be_o) a_2 + (1-\be_o)^2 a_2
    \\
    &= (1-\be_o) a_2,
\end{align*}
which concludes the proof.
\end{proof}

Theorem~\ref{th:oracle_LW} has important implications. First, since $\al_o =
(1-\be_o)\eta$ is determined by the value of $\be_o \in [0,1)$, the optimal
	RSCM can be expressed as
\[
	\S_{\al_o,\be_o}
	= \be_o
	\S
	+ (1-\be_o)\eta \I.
\]
The scale $\eta$ can be estimated with 
\begin{equation}
	\hat \eta=\frac{\tr(\S)}{\pdim},
	\label{eq:etahat}
\end{equation}
which is a consistent estimator both in the conventional (fixed $\pdim$) and
the RMT asymptotic regime. Therefore, the estimator of $\al_o$ is simply
$\hat \al_o = (1-\hat \be_o) \hat \eta$, and we can focus on finding an
estimator $\hat \be_o$ of $\be_o$. 

This is the approach also taken by Ledoit and Wolf~\cite{ledoit_wolf:2004}
who develop an estimator $\hat \be_o^{\textup{LW}}$ that converges to
$\be_o$ in~\eqref{eq:beta0ver2} under the RMT regime (R1) and some mild
technical assumptions when sampling from a distribution $\x \sim F$ with
finite 8th order moments. The estimate of $\al_o$ is then $\hat
\al_o^{\textup{LW}}	= (1-\hat\be_o^{\textup{LW}})\hat \eta$. The RSCM based
on the shrinkage parameter pair $(\hat \al_o^{\textup{LW}},  \hat
\be_o^{\textup{LW}})$ of \cite{ledoit_wolf:2004} is referred hereafter as
the \emph{LW-RSCM} estimator.


\section{Optimal oracle shrinkage parameters: the elliptical
case}\label{sec:optim2}

We now derive the optimal oracle shrinkage parameters for the case in which
the data can be assumed elliptically distributed. For a review of elliptical
distributions, see~\cite{muirhead:1982,fang1990symmetric},
and~\cite{ollila2012complex}. 

The probability density function (p.d.f.) of an elliptically distributed
random vector $\x \sim \mathcal E_\pdim(\bmu,\M,g)$ is
\[
	f(\x)
	= C_{\pdim,g} |\M|^{-1/2} g\big((\x - \bmu)^\top \M^{-1} (\x-\bmu)\big),
 \]
where $\E[\x]=\bmu$ is the mean vector, $\M$ is the positive definite
covariance matrix, $g: [0,\infty) \to [0,\infty)$ is the \emph{density
generator}, which is a fixed function that is independent of $\x$, $\bmu$
and $\M$, and $C_{\pdim,g}$ is a normalizing constant ensuring that $f(\x)$
integrates to 1. Here, we let $g$ to be defined so that $\M$ represents the
covariance matrix of $\x$, which means that $\int_0^\infty t^{\dim/2} \,
g(t) \mathrm{d} t = \pdim$.  The functional form of the density generator
$g$ determines the elliptical distribution. For example, the multivariate
normal (MVN) distribution, denoted $\x \sim \mathcal N_\pdim(\bmu,\M)$, is
obtained when $g(t)=\exp(-t/2)$. As in Theorem~\ref{th:oracle_LW}, we assume
that the elliptical population possesses finite fourth order moments.
Technically, this implies that
\begin{equation} \label{eq:finite_4mom}
	\int_0^\infty t^{\dim/2+1} \, g(t) \mathrm{d} t < \infty.
\end{equation}
For example, the MVN and the multivariate $t$-distribution with degrees of
freedom $\nu > 4$ all verify the above condition.

The \emph{kurtosis} of a  random variable $x$ is defined as
\[
	\mathrm{kurt}(x) = \frac{\E[ (x-\mu)^4]}{(\E[ (x - \mu)^2])^2} - 3 ,
\]
where $\mu = \E[x]$. The \emph{elliptical kurtosis
parameter}~\cite{muirhead:1982} $\kappa$ of a random vector
$\x=(x_1,\ldots,x_\pdim)^\top \sim \mathcal E_\pdim(\bmu,\M,g)$ is defined
as
\begin{equation}\label{eq:kappa}
	\kappa = \frac{\E[r^4]}{\pdim(\pdim+2)} - 1
	= \frac 1 3 \cdot \mathrm{kurt}(x_i),
\end{equation}
where $r$ is the \emph{generating variate} or \emph{second order modular
variate} of the elliptical distribution, which is defined as the square-root
of the quadratic form $r^2= (\x-\bmu)^\top \M^{-1} (\x-\bmu)$.  Above
$\mathrm{kurt}(x_i)$ denotes the kurtosis of (any) marginal variable $x_i$.
The elliptical kurtosis shares properties similar to the kurtosis of a real
random variable. Especially, if $\x \sim \mathcal N_\pdim(\bmu,\M)$, then
$\ka=0$. This is obvious since the marginal distributions are Gaussian and
hence $\kappa = (1/3) \, \mathrm{kurt}(x_i)=0$. Another way to derive this
is by noting that the quadratic form $r^2$ has a chi-squared distribution
with $\pdim$ degrees of freedom, i.e., $r^2 \sim \chi^2_\pdim$, and hence
$\E[r^4] = \pdim(\pdim+2)$.

The importance of the elliptical kurtosis parameter $\kappa$ is due to the
fact that the $\pdim^2 \times \pdim^2$ covariance matrix of $\ve(\S)$
depends on the underlying elliptical distribution $g$ only through $\kappa$.
This result is  established in Theorem~\ref{th:covS}. 

We will utilize the following matrix decomposition in our proofs. 
Let $\X= ( \x_1 \cdots \x_n )^\top$ denote the $\ndim \times \pdim$ data matrix 
with $i$th transposed observation as its row vector. Then, the SCM can be
written as
\[
\S = \frac{1}{n-1}\X^\top \H \X
\]
where $\H$ is the centering matrix. 

\begin{theorem}\label{th:covS}
Let $\x_1,\ldots, \x_\ndim$ denote an i.i.d.\ random sample from an
elliptical distribution with finite fourth order moments, mean vector
$\bmu$, and covariance matrix $\M$. Then,
\begin{gather}
	\var(\ve(\S)) = \notag \\ 
	\Big(\frac{1}{\ndim-1} + \frac{\ka}{\ndim} \Big)(\I + \commat) (\M \otimes \M)
	+ \frac{\ka}{\ndim} \ve(\M) \ve(\M)^\top.
	\label{eq:cov_vecS}
\end{gather}
\end{theorem}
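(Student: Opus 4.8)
The plan is to reduce the whole computation to a single-sample fourth-moment calculation and then absorb the effect of centering into a combinatorial sum over the entries of $\H$. First I would observe that $\S$ depends on the data only through the centered vectors $\x_i-\bar\x$, so I may take $\bmu=\bo 0$ without loss of generality. The elliptical assumption enters through the fourth-moment identity (see~\cite{muirhead:1982}): writing $\sigma_{ab}$ for the $(a,b)$ entry of $\M$, for $\x\sim\mathcal E_\pdim(\bo 0,\M,g)$ with finite fourth moments one has
\[
	\E[x_a x_b x_c x_d]=(1+\ka)\big(\sigma_{ab}\sigma_{cd}+\sigma_{ac}\sigma_{bd}+\sigma_{ad}\sigma_{bc}\big),
\]
which, using $\ve(\x\x^\top)=\x\otimes\x$, translates directly into the single-sample vectorized covariance
\[
	\cov(\ve(\x\x^\top))=(1+\ka)(\I+\commat)(\M\otimes\M)+\ka\,\ve(\M)\ve(\M)^\top,
\]
where the term $\commat(\M\otimes\M)$ encodes the $\sigma_{ad}\sigma_{bc}$ contribution.

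Next I would vectorize the SCM. From $\S=\frac{1}{\ndim-1}\sum_{i,j}H_{ij}\x_i\x_j^\top$ with $H_{ij}=[\H]_{ij}=\delta_{ij}-1/\ndim$, I obtain $\ve(\S)=\frac{1}{\ndim-1}\sum_{i,j}H_{ij}(\x_j\otimes\x_i)$, hence
\[
	\var(\ve(\S))=\frac{1}{(\ndim-1)^2}\sum_{i,j,k,l}H_{ij}H_{kl}\,\cov(\x_j\otimes\x_i,\,\x_l\otimes\x_k).
\]
The core of the argument is to evaluate this quadruple sum. By independence across samples and the central symmetry of the elliptical law (all odd moments vanish), $\cov(\x_j\otimes\x_i,\x_l\otimes\x_k)$ is nonzero only when every sample index in $\{i,j,k,l\}$ occurs an even number of times, which leaves the all-equal configuration $i=j=k=l$ together with the three two-pair configurations.

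For the surviving configurations I would identify the matrix contribution of each. The all-equal term reproduces the single-sample covariance above. Among the two-pair terms, the pairing $i=j\neq k=l$ contributes zero, since it is the covariance of functions of two independent samples; the pairing $i=k\neq j=l$ yields $\var(\x_j\otimes\x_i)=\M\otimes\M$; and the pairing $i=l\neq j=k$ yields $\cov(\x_j\otimes\x_i,\x_i\otimes\x_j)=(\M\otimes\M)\commat=\commat(\M\otimes\M)$, where I use $\commat(\u\otimes\v)=\v\otimes\u$ and $\commat(\A\otimes\B)\commat=\B\otimes\A$ to rewrite the reversed Kronecker order. It then remains to supply the weights, which are sums of $H_{ij}H_{kl}$ over each configuration; the relevant identities are $\sum_i H_{ii}^2=(\ndim-1)^2/\ndim$ and $\sum_{i\neq j}H_{ij}^2=(\ndim-1)/\ndim$, both read off from $H_{ii}=1-1/\ndim$, $H_{ij}=-1/\ndim$.

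Collecting terms, the all-equal configuration enters with weight $\frac{1}{(\ndim-1)^2}\cdot\frac{(\ndim-1)^2}{\ndim}=\frac{1}{\ndim}$ and the two nonvanishing cross-pairings each with weight $\frac{1}{(\ndim-1)^2}\cdot\frac{\ndim-1}{\ndim}=\frac{1}{\ndim(\ndim-1)}$; adding $\frac{1+\ka}{\ndim}$ to $\frac{1}{\ndim(\ndim-1)}$ gives the coefficient $\frac{1}{\ndim-1}+\frac{\ka}{\ndim}$ of $(\I+\commat)(\M\otimes\M)$, while the $\ve(\M)\ve(\M)^\top$ term retains coefficient $\frac{\ka}{\ndim}$, which is the asserted formula. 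The main obstacle is the careful case bookkeeping: correctly matching each index pairing to the $\M\otimes\M$, $\commat(\M\otimes\M)$, or $\ve(\M)\ve(\M)^\top$ term, and verifying that centering (rather than a known mean) is exactly what turns a naive $1/\ndim$ into the $\frac{1}{\ndim-1}$ appearing in the statement.
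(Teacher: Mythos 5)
Your proof is correct, but it takes a genuinely different route from the paper's. The paper first reduces to the spherical case via the stochastic decomposition $\x_i =_d \M^{1/2}\z_i + \bmu$, and then invokes Tyler's (1982) structural result for radially distributed matrices, which asserts \emph{a priori} that $\var\big(\tfrac{1}{\ndim-1}\ve(\Z^\top\H\Z)\big)=\tau_1(\I+\commat)+\tau_2\,\ve(\I)\ve(\I)^\top$; the entire computation then collapses to evaluating two scalars — $\tau_1$, an off-diagonal variance, and $\tau_2$, a covariance of two diagonal entries — using the representation $\z_i =_d r_i\u_i$ and explicit moments of the uniform distribution on the sphere, after which conjugation by $\M^{1/2}\otimes\M^{1/2}$ maps the answer back. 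You instead stay in the original coordinates, expand $\ve(\S)=\tfrac{1}{\ndim-1}\sum_{i,j}H_{ij}(\x_j\otimes\x_i)$ into a quadruple sum, and feed in the distributional information through the classical elliptical fourth-moment identity rather than through sphere moments (the two inputs are equivalent, since $\E[r^4]=(1+\ka)\pdim(\pdim+2)$ underlies both). I checked the pieces of your bookkeeping: the parity argument correctly eliminates all configurations except the all-equal one and the three two-pair ones; the pairing $i=j\neq k=l$ vanishes by independence; $\var(\x_j\otimes\x_i)=\M\otimes\M$ and $\cov(\x_j\otimes\x_i,\x_i\otimes\x_j)=\commat(\M\otimes\M)$ for $i\neq j$ are right; the weights $\sum_i H_{ii}^2=(\ndim-1)^2/\ndim$ and $\sum_{i\neq j}H_{ij}^2=(\ndim-1)/\ndim$ are correct; and $\tfrac{1+\ka}{\ndim}+\tfrac{1}{\ndim(\ndim-1)}=\tfrac{1}{\ndim-1}+\tfrac{\ka}{\ndim}$ closes the argument. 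Your WLOG step ($\S$ is translation invariant, so $\bmu=\bo 0$ and odd moments vanish) is also sound and replaces the paper's reliance on $\H$ annihilating the mean inside the decomposition. What the paper's invariance argument buys is economy — the form of the answer is known in advance, so no combinatorics over pairings is needed, only two scalar moments. What your direct expansion buys is self-containedness (no appeal to Tyler's lemma or the spherical reduction) and transparency: one sees exactly that the all-equal configuration contributes $\tfrac{1+\ka}{\ndim}$, the two surviving cross-pairings contribute $\tfrac{1}{\ndim(\ndim-1)}$ each to $\M\otimes\M$ and $\commat(\M\otimes\M)$, and the centering built into $\H$ is precisely what turns the naive $1/\ndim$ into $\tfrac{1}{\ndim-1}+\tfrac{\ka}{\ndim}$.
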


\begin{proof}
For elliptically distributed observations $\{\x_i \}_{i=1}^\ndim \iidsim
    \mathcal{E}_p (\boldsymbol{\mu},\M,g)$, we have the following stochastic
    decomposition $\x_i =_d {\M}^{1/2} \z_i + \bmu$, where $\z_i
    \sim \mathcal{E}_p(\boldsymbol{0},\I,g)$.  Let $\Z= ( \z_1 \cdots \z_n
    )^\top$ denote the $\ndim \times \pdim$ data matrix collecting the
    random vectors $\z_i$ as  its row vectors.  Then, the stochastic
    decomposition implies that 
\[
    \X^\top \H \X  =_d  \M^{1/2}\Z^\top \H \Z\M^{1/2}. 
\]
Hence, 
\begin{align*}
	&\var\left(\veco{\S}\right)
	= \left(
	\var\left(\frac{1}{n-1}\veco{\M^{1/2}\Z^\top \H \Z\M^{1/2}}
	\right)\right)
	\\
	&= (\M^{1/2}\otimes \M^{1/2})\var\left(\frac{1}{n-1}\veco{\Z^\top \H
	\Z}\right) (\M^{1/2}\otimes \M^{1/2}).
\numberthis
\label{eq:covvecSwithZtHZ}
\end{align*}
Since the matrix $\Z^\top \H \Z$ is radially distributed, we can
apply~\cite{tyler1982radial}, which states
\begin{equation}
	\var\left(\frac{1}{n-1}\veco{\Z^\top\H\Z}\right)
	= \tau_1(\I + \commat) + \tau_2 \veco{\I} \veco{\I}^\top,
\label{eq:covvecZtHZ}
\end{equation}
where the parameters $\tau_1$ and $\tau_2$ correspond to the variance of
any off-diagonal element and the covariance of any two diagonal elements of
the matrix $\frac{1}{n-1}\Z^\top \H \Z$, respectively.

We will first derive the expression for $\tau_1$. For $q \neq r$, it holds
that
\begin{align*}
	{(n-1)}^2\tau_1
	&= \var\left({(\Z\e_q)}^\top \H (\Z \e_r)\right)\\
	&= \var\left(\tr\left(\H (\Z \e_r){(\Z\e_q)}^\top\right)\right)\\
	&= \var\left(\veco{\H}^\top\veco{(\Z\e_r){(\Z\e_q)}^\top}\right)\\
	&= \veco{\H}^\top
	   \var\left(\veco{(\Z \e_r){(\Z\e_q)}^\top}\right)
	   \veco{\H}.
\end{align*}
Next we recall that  $\z_i \sim\ \mathcal{E}_p(\boldsymbol{0},\I,g)$ has a
stochastic representation ({\it cf.} \cite[Theorem 2.9]{fang1990symmetric}) 
$\z_i =_d r_i \u_i$, where $r_i$ is the generating variate with a density
$f(r) = C \cdot r^{p-1} g(r^2)$ (where $C$ is normalizing constant)  and
\mbox{$\u_i = {(u_{i1},u_{i2},\ldots,u_{ip})}^\top$} is uniformly
distributed on the unit hypersphere $\mathcal S^{p-1}=\{ \x \in \R^\pdim
\, : \, \x^\top \x = 1 \}$ and $r_i$ is independent of $\u_i$.  Using
this  stochastic representation for $\z_i$, we can write $\Z\e_q = {(r_1
u_{1q}, r_2 u_{2q}, \ldots, r_n u_{nq})}^\top$, The $kl$th element of
the $ij$th block (i.e., the $ijkl$th element) of the $n^2 \times n^2$
matrix $\var\big(\mathrm{vec}\big((\Z \e_r){(\Z\e_q)}^\top\big)\big)$
can then be written as
\begin{align*}
	&\cov\left((\Z\e_r)_k (\Z\e_q)_i , (\Z\e_r)_l (\Z\e_q)_j\right)
	 = \\
	 &\quad\E\left[ r_{k} u_{kr} \cdot r_{i} u_{iq} \cdot r_{l} u_{lr} \cdot
	 r_{j} u_{jq} \right]
	 \\&\quad
	 -\E\left[ r_{k} u_{kr} \cdot r_{i} u_{iq}\right]
	 \E\left[r_{l} u_{lr} \cdot r_{j} u_{jq} \right].
\end{align*}
Using the following identities for $\forall i,j$ and $q\neq
r$ ({\it cf.} \cite[Section~3.1]{fang1990symmetric}) : 
\begin{align*}
	\E\left[u_{iq} u_{jr}\right]      &= 0,
	&\E\left[u_{iq}^2 u_{ir}^2\right] &= \frac{1}{p(p+2)},
	\\
	\E\left[u_{iq}^2\right]           &= \frac{1}{p},
	&\E\left[u_{iq}^4\right]          &= \frac{3}{p(p+2)},
	\\
	\E\left[r_i^2\right]              &= p,
	&\E\left[r_i^4\right]             &= (1+\kappa)p(p+2),
\end{align*}
where $3\kappa = \text{kurt}(z_{iq}) = \text{kurt}(x_{iq})$, we find that
the only non-zero elements of $\var\big(\mathrm{vec}\big((\Z
\e_r){(\Z\e_q)}^\top\big)\big)$ correspond to
\begin{align*}
	\E[r_i^4] \E[u_{ir}^2 u_{iq}^2] &= 1+\kappa
	&\text{for}~i&=j=k=l,~\textup{and}
	\\
	\E[r_{i}^2]\E[r_{k}^2] \E[u_{ir}^2]\E[u_{kq}^2] &= 1
	&\text{for}~i&=j, k=l, i\neq k.
\end{align*}
This implies that
\begin{equation}
	\var\left(\veco{(\Z\e_r)(\Z \e_q)^\top}\right)
	= \I + \kappa \sum_{i=1}^n \e_i \e_i^\top \otimes \e_i \e_i^\top.
\end{equation}
Hence, we can write $ \tau_1$  as
\begin{align*}
	\tau_1 &=
	\frac{1}{{(n-1)}^2} \veco{\H}^\top
	\left(
		\I + \kappa \sum_{i=1}^n \e_i \e_i^\top \otimes \e_i \e_i^\top
	\right)
	\veco{\H}\\
	&=
	\frac{1}{n-1} + \frac{\kappa}{n},
	\numberthis
	\label{eq:tau1}
\end{align*}
where we used $\veco{\H}^\top \veco{\H} = n - 1$ and
\begin{align*}
	\sum_{i=1}^n \veco{\H}^\top (\e_i \e_i^\top \otimes \e_i \e_i^\top)
	\veco{\H} = \sum_{i=1}^n h_{ii}^2 = \frac{{(n-1)}^2}{n}.
\end{align*}

Next, we find the expression for $\tau_2$. For $q\neq r$, we have
\begin{align*}
	{(n-1)}^2\tau_2
	&=
	\cov\left((\Z\e_q)^\top \H (\Z \e_q), (\Z\e_r)^\top \H (\Z \e_r)\right)
	\\
	&=
	\E\left[(\Z\e_q)^\top \H (\Z \e_q) (\Z\e_r)^\top \H (\Z \e_r)\right]
	\\&\quad
	- \E[(\Z\e_q)^\top \H (\Z \e_q)] \E[(\Z\e_r)^\top \H (\Z \e_r)].
\end{align*}
By using basic algebraic properties of the trace and the vectorization
transform and noting that $\E[(\Z \e_r)(\Z\e_r)^\top] = \I$, we arrive at
the form
\begin{align*}
	&{(n-1)}^2\tau_2
	=
	\\
	&\tr\left((\H \otimes \H)\E\left[
		\veco{(\Z \e_q) (\Z\e_r)^\top}\veco{(\Z\e_q)(\Z \e_r)^\top}^\top
		\right]\right)
	\\
	&\quad - \tr(\H)^2.
\end{align*}
The expression involving the expectation is equal to
$\var\left(\veco{(\Z\e_q)(\Z\e_r)^\top}\right)$, which implies
\begin{gather*}
	{(n-1)}^2\tau_2 
	= \\
	\tr \left\{(\H \otimes \H)
	\left(
		\I + \kappa \sum_{i=1}^n \e_i \e_i^\top \otimes \e_i \e_i^\top
	\right) \right\}
	-
	\tr(\H)^2.
\end{gather*}
By noting that $\tr(\H \otimes \H) = \tr(\H)^2$ and
\begin{align*}
	\sum_{i=1}^n
	\tr\left((\H \otimes \H)(\e_i \e_i^\top \otimes \e_i\e_i^\top \right)
	=
	\sum_{i=1}^n h_{ii}^2,
\end{align*}
we find that
\begin{align*}
	\tau_2
	=
	\frac{1}{{(n-1)}^2}
	\kappa \frac{{(n-1)}^2}{n}
	=
	 \frac{\kappa}{n}.
	\numberthis
	\label{eq:tau2}
\end{align*}
By substituting~\eqref{eq:covvecZtHZ} into~\eqref{eq:covvecSwithZtHZ}, and
noticing that 
\begin{align*}
	(\M^{1/2} \otimes \M^{1/2}) \ve(\I) &= \ve(\M)~\text{and}\\
	(\M^{1/2} \otimes \M^{1/2})
	(\I + \commat)
	(\M^{1/2} \otimes \M^{1/2})
	&= (\I + \commat) 
	(\M \otimes \M),
\end{align*}
completes the proof.
\end{proof}

Theorem~\ref{th:covS} reveals that the elliptical kurtosis parameter
$\kappa$ along with the true covariance matrix $\M$ provide a complete
description of the covariances between the elements $s_{ij}$ and $s_{kl}$ of
the SCM $\S=(s_{ij})$. The mathematics underlying Theorem~\ref{th:covS} is
so rich that we are able to relate it to at least three fundamental results
in the field of statistics given below.

First, consider the one-dimensional case, $\pdim=1$, where we have a
univariate sample $x_1, \ldots, x_\ndim$ from a distribution of a random
variable $x \in \R$. Then, the SCM reduces to the unbiased sample variance
$s^2 = (1/(\ndim-1)) \sum_{i=1}^\ndim (x_i - \bar x)^2$ and
equation~\eqref{eq:cov_vecS} reduces to $\var(s^2)$. We can now compute
$\var(s^2)$ using~\eqref{eq:cov_vecS}, which states that 
\begin{align}
	\var(s^2)
	&= \left( \frac{1}{\ndim-1} + \frac{\kappa}{\ndim} \right) 2 \sigma^4
		+ \frac{\ka}{\ndim} \sigma^4 \notag \\
	&= \sigma^4 
    \left(\frac{2}{\ndim - 1} + \frac{\mathrm{kurt}(x)}{\ndim} \right),
	\label{eq:univar}
\end{align}
where we used that $\M \equiv \sigma^2 = \E[ (x- \E[x])^2]$, $\M \otimes \M
= \sigma^4$ and $\kappa = \mathrm{kurt}(x)/3$ due to~\eqref{eq:kappa}.
Hence, we obtained the classic formula for $\var(s^2)$ often encountered in
elementary statistics textbooks. Under the Gaussian distribution,
$\mathrm{kurt}(x)=0$, in which case Theorem~\ref{th:covS} states that
$\var(s^2)= 2\sigma^4/(\ndim-1)$. This is an expected result since
$(\ndim-1)s^2/\sigma^2 = \sum_i (x_i - \bar x)^2/\sigma^2 \sim
\chi^2_{\ndim-1}$.

Secondly, we can connect Theorem~\ref{th:covS} with the well-known
covariance matrix of the Wishart distribution. Let $W_\pdim(m, \bo M)$
denote the Wishart distribution of a random symmetric positive definite
$\pdim \times \pdim$ matrix where $m > \pdim-1$ denotes the degrees of
freedom parameter and $\bo M \in \PDHm{\pdim}$ denotes the scale matrix
parameter of the Wishart distribution. Under the MVN assumption, it is
well-known that $(\ndim-1)\S \sim W_\pdim( \ndim-1, \M)$ and consequently
$\var(\ve(\S))$ has the famous covariance matrix form
\begin{equation}
	\var( \ve(\S)) = \frac{1}{\ndim-1} (\I + \commat) (\M \otimes \M).
	\label{eq:cov_Wishart}
\end{equation}
Suppose now that the elliptical distribution in Theorem~\ref{th:covS} is the
multivariate normal, thus, $\x_1,\ldots,\x_\ndim \iidsim \mathcal
N_\pdim(\bmu, \M)$. Since in this case $\ka=0$, we have
that~\eqref{eq:cov_vecS} reduces to~\eqref{eq:cov_Wishart}.

Lastly, notice that
\[
	\var( \sqrt{n}\ve(\S)) \to (1+\kappa ) (\I + \commat) (\M \otimes \M)
	+ \kappa \, \ve(\M) \ve(\M)^\top
\]
as $\ndim \to \infty$. The right hand side of the previous equation equals
the well-known asymptotic covariance matrix of the limiting normal
distribution of $\sqrt{n}(\ve(\S) - \ve(\M))$ when sampling from an
elliptical distribution $\mathcal E_\pdim(\bmu, \M,g)$ with finite
fourth order moments. This is a famous result in multivariate
statistics~\cite{muirhead:1982}.

In the next Lemma, we derive the MSE and normalized MSE of the SCM.

\begin{lemma}\label{lem:NMSE}
Let $\x_1,\ldots,\x_\ndim$ denote an i.i.d.\ random sample from a $p$-variate 
elliptical distribution with finite fourth order moments, mean $\bmu$, and
covariance matrix $\M$. Then, the MSE and the NMSE of $\S$
are
\begin{align*}
	\MSE(\S) 
		&= \Big( \frac{1}{\ndim-1} + \frac{\ka}{\ndim}
\Big) \tr(\M)^2 + \Big(\frac{1}{\ndim-1} +  \frac{ 2\ka}{\ndim} \Big) \tr(\M^2) \\
	\mathrm{NMSE}(\S) 
	&= \left(1+\frac{p}{\gamma}\right)\Big( \frac{1}{\ndim-1} + \frac{\ka}{\ndim}
\Big)+ \frac{\ka}{\ndim}
\end{align*}
where  $\gamma$ and $\ka$ are defined in~\eqref{eq:scale} and \eqref{eq:kappa}, 
respectively.
\end{lemma}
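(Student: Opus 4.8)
The plan is to exploit the unbiasedness of the SCM to collapse the MSE onto a single trace and then feed in the closed form of $\var(\ve(\S))$ supplied by Theorem~\ref{th:covS}. Since $\E[\S]=\M$, the bias contribution vanishes and
\[
	\MSE(\S) = \E\big[\|\S-\M\|_\Fr^2\big] = \tr\big(\var(\ve(\S))\big),
\]
exactly the reduction already invoked in the proof of Theorem~\ref{th:oracle_LW}. Hence the entire task reduces to taking the trace of the right-hand side of~\eqref{eq:cov_vecS}.

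Next I would evaluate the trace of each of the three summands in~\eqref{eq:cov_vecS} using the notation identities listed earlier. The identity $\tr(\A\otimes\B)=\tr(\A)\tr(\B)$ gives $\tr(\M\otimes\M)=\tr(\M)^2$, and the identity $\tr(\ve(\A)\ve(\B)^\top)=\tr(\A^\top\B)$ gives $\tr(\ve(\M)\ve(\M)^\top)=\tr(\M^2)$. The one step that is \emph{not} immediate from the tabulated identities, and which I expect to require the most care, is the mixed term $\tr(\commat(\M\otimes\M))$. I would extract it directly from the block form $\commat=\sum_{i,j}\e_i\e_j^\top\otimes\e_j\e_i^\top$: multiplying out and taking traces block by block yields $\tr(\commat(\M\otimes\M))=\sum_{i,j}m_{ij}m_{ji}=\tr(\M^2)$ (using that $\M$ is symmetric). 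Consequently $\tr\big((\I+\commat)(\M\otimes\M)\big)=\tr(\M)^2+\tr(\M^2)$.

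Assembling the pieces, the coefficient $\tfrac{1}{\ndim-1}+\tfrac{\ka}{\ndim}$ multiplies $\tr(\M)^2+\tr(\M^2)$, while the final summand of~\eqref{eq:cov_vecS} contributes an additional $\tfrac{\ka}{\ndim}\tr(\M^2)$; collecting the $\tr(\M^2)$ terms produces the advertised coefficient $\tfrac{1}{\ndim-1}+\tfrac{2\ka}{\ndim}$ and hence the stated MSE expression. For the NMSE I would then divide by $\|\M\|_\Fr^2=\tr(\M^2)$ and invoke the definition of the sphericity in~\eqref{eq:scale}, which gives $\tr(\M)^2/\tr(\M^2)=\pdim/\gam$. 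Factoring $\tfrac{1}{\ndim-1}+\tfrac{\ka}{\ndim}$ out of the two leading terms finally yields the compact form $\big(1+\tfrac{\pdim}{\gam}\big)\big(\tfrac{1}{\ndim-1}+\tfrac{\ka}{\ndim}\big)+\tfrac{\ka}{\ndim}$, completing the argument. The whole proof is thus a routine trace evaluation once Theorem~\ref{th:covS} is in hand, the only genuinely nonroutine ingredient being the commutation-matrix trace identity $\tr(\commat(\M\otimes\M))=\tr(\M^2)$.
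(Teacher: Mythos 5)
Your proposal is correct and takes essentially the same route as the paper's own proof: it reduces $\MSE(\S)$ to $\tr(\var(\ve(\S)))$ via unbiasedness, applies Theorem~\ref{th:covS} together with the same three trace identities (including the identical block-form computation of $\tr\big(\commat(\M\otimes\M)\big)=\tr(\M^2)$ from $\commat=\sum_{i,j}\e_i\e_j^\top\otimes\e_j\e_i^\top$), and obtains the NMSE by dividing by $\tr(\M^2)$ and using $\tr(\M)^2/\tr(\M^2)=\pdim/\gam$. The only cosmetic remark is that your appeal to the symmetry of $\M$ in the commutation-matrix step is unnecessary, since $\sum_{i,j}m_{ij}m_{ji}=\tr(\M^2)$ holds for any square matrix.
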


\begin{proof}
Since $\S$ is unbiased, i.e., $\E[\S] = \M$, it holds that
\begin{align*}
	\mathrm{MSE}(\S) 
	&= 
	\E\left[\|\S - \M\|_\Fr^2\right]
	\\
	&= 
	\E\left[ \ve(\S-\M)^\top \ve(\S-\M) \right]
	\\
	&= 
	\tr\left(\E\left[
		\ve\left(\S - \E[\S]\right)
		\ve\left(\S - \E[\S]\right)^\top
			\right]\right)
	\\
	&=
	\tr \left( \var( \ve(\S)) \right),
	\numberthis
	\label{eq:lem:NMSE:apu1}
\end{align*}
Then we substitute the expression stated in~\eqref{eq:cov_vecS} for
$\var(\ve(\S))$ into equation \eqref{eq:lem:NMSE:apu1} and use the
following identities 
\begin{align*}
	\tr(\M \otimes \M) &= \tr(\M)^2, 
	\\
	\tr \left(\ve(\M) \ve(\M)^\top \right) &= \tr(\M^2),~\text{and}
	\\
	\tr\big( \commat (\M \otimes \M) \big) &= \tr(\M^2)
\end{align*}
where the last identity follows from
\begin{align*}
	\tr\big( \commat (\M \otimes \M) \big) 
	&= 
	\sum_{i,j} \tr\big( \left(\e_i \e_j^\top \otimes \e_j \e_i^\top \right)
	(\M \otimes \M) \big)
	\\
	&= 
	\sum_{i,j} \tr\left(\e_j^\top \M \e_i \cdot \e_i^\top \M \e_j \right)
	\\
	&= 
	\tr(\M^2),
\end{align*}
to obtain the stated expression of the $\MSE(\S)$. The expression for the
$\mathrm{NMSE}(\S)$ is obtained by dividing the $\MSE(\S)$ by
$\|\M\|_\Fr^2 = \tr(\M^2)$.
\end{proof}

The next theorem states that the oracle parameters derived in
Theorem~\ref{th:oracle_LW} can be written in a much simpler form when
sampling from an elliptically symmetric distribution.

\begin{theorem}\label{th:beta0_ELL}
Let $\x_1,\ldots,\x_\ndim$ denote an i.i.d.\ random sample from an
elliptical distribution with finite fourth order moments, mean $\bmu$, and
covariance matrix $\M$. Then the oracle parameters $(\al_o,\be_o)$ that
minimize the MSE are
\begin{align*}
    \be_o^{\textup{Ell}}
	&=  \dfrac{(\gamma-1)}
	{( \gamma - 1) + \ka (2 \gamma  + p)/\ndim + (\gamma  + p)/(n-1) }, 
\end{align*}
and $\al_o^{\textup{Ell}} = (1-\be_o^{\textup{Ell}}) \eta$, where the
parameters $\eta$, $\gamma$ and $\kappa$ are defined in~\eqref{eq:scale}
and~\eqref{eq:kappa}, respectively. 
\end{theorem}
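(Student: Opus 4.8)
The plan is to combine the two results already established, namely the expression for the oracle shrinkage coefficient in~\eqref{eq:beta0ver2b} and the formula for the normalized mean squared error of the SCM in Lemma~\ref{lem:NMSE}. The key observation is that~\eqref{eq:beta0ver2b} expresses $\be_o$ entirely in terms of the sphericity $\gamma$ and the quantity $\mathrm{NMSE}(\S)$, and that this representation was derived in Theorem~\ref{th:oracle_LW} for \emph{any} $\pdim$-variate distribution with finite fourth order moments. Since an elliptical population is a special case, \eqref{eq:beta0ver2b} applies verbatim, and all the distribution-specific structure enters only through $\mathrm{NMSE}(\S)$.

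First I would substitute the elliptical expression
\[
	\mathrm{NMSE}(\S) = \Big(1+\frac{\pdim}{\gamma}\Big)\Big(\frac{1}{\ndim-1} + \frac{\ka}{\ndim}\Big) + \frac{\ka}{\ndim}
\]
from Lemma~\ref{lem:NMSE} into the product $\gamma \cdot \mathrm{NMSE}(\S)$ appearing in the denominator of~\eqref{eq:beta0ver2b}. Multiplying through by $\gamma$ clears the $\pdim/\gamma$ term, yielding $(\gamma + \pdim)\big(1/(\ndim-1) + \ka/\ndim\big) + \gamma\ka/\ndim$. Expanding and collecting the two $\ka$ contributions gives $(\gamma+\pdim)/(\ndim-1) + \ka(2\gamma+\pdim)/\ndim$, which is exactly the pair of extra denominator terms in the claimed expression for $\be_o^{\textup{Ell}}$.

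The remaining step is purely bookkeeping: inserting this simplified form of $\gamma \cdot \mathrm{NMSE}(\S)$ back into $(\gamma-1)/[(\gamma-1) + \gamma\,\mathrm{NMSE}(\S)]$ produces the stated formula, and the companion identity $\al_o^{\textup{Ell}} = (1-\be_o^{\textup{Ell}})\eta$ follows immediately from~\eqref{eq:alpha0}, which holds for any distribution. I do not anticipate a genuine obstacle here, since the heavy lifting was carried out in Theorem~\ref{th:covS} and Lemma~\ref{lem:NMSE}; the only care required is the correct grouping of the kurtosis terms, specifically recognizing that the $(\gamma+\pdim)\ka/\ndim$ term arising from the NMSE and the additional standalone $\gamma\ka/\ndim$ term combine into the single term $\ka(2\gamma+\pdim)/\ndim$ rather than being double-counted.
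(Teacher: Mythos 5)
Your proposal is correct and is exactly the paper's own argument: the paper's proof of Theorem~\ref{th:beta0_ELL} reads simply ``Follows from~\eqref{eq:beta0ver2b} and Lemma~\ref{lem:NMSE},'' and you have spelled out precisely that substitution, with the algebra $\gamma\,\mathrm{NMSE}(\S) = (\gamma+\pdim)/(\ndim-1) + \ka(2\gamma+\pdim)/\ndim$ carried out correctly, including the grouping of the two kurtosis terms.
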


\begin{proof} 
Follows from  \eqref{eq:beta0ver2b} and Lemma~\ref{lem:NMSE}.
\end{proof}

It is not surprising that $\beta_o$, and hence also $\al_o$, depend on the
density generator $g$ of the elliptical distribution only via the elliptical
kurtosis parameter $\kappa$. Specifying the elliptical distribution also
specifies the value of $\kappa$. For example, when sampling from the
Gaussian distribution, the elliptical kurtosis parameter is $\kappa=0$ and 
$\be_o^{\textup{Ell}}$ in Theorem~\ref{th:beta0_ELL} reduces to
\begin{equation} \label{eq:beta0Gau}
    \be_o^{\textup{Gau}}
	= \frac{(\gamma-1)}{(\gamma-1) + (\gamma + \pdim)/(\ndim-1)}.
\end{equation}
Consequently, an estimator of $\be_o^{\textup{Gau}}$ is obtained by
substituting an estimator $\hat \gamma$ in place of $\gamma$
in~\eqref{eq:beta0Gau}. Recall that an estimator of $\al_o^{\textup{Gau}}$
is then obtained as $\hat \al_o^{\textup{Gau}} = (1-\hat
\be_o^{\textup{Gau}}) \tr(\S)/\pdim$. 

Since in this paper we do not assume any particular elliptical distribution,
we need to find an estimator $\hat \kappa$ of the elliptical kurtosis
parameter $\kappa$ as well. Naturally, if the assumption on multivariate
normality of the data is valid, then~\eqref{eq:beta0Gau} should be used for
estimating the optimal oracle value.

When the mean vector of the population is known, the unbiased SCM is
$\S=\frac 1 \ndim \sum_{i=1}^\ndim \x_i \x_i^\top$ (as one can assume
without loss of generality that $\bom \mu = \bo 0$). In this case the
optimal shrinkage parameter $\be_o$ of the RSCM stated in
Theorem~\ref{th:beta0_ELL} remains unchanged apart from the  last term in
the denominator of $\be_o$, that is, $(\gamma + \pdim)/(n-1)$ is replaced by
$(\gamma + \pdim)/n$. This centered case was addressed in
\cite{ollila2017optimal}.

\section{Estimation of the oracle parameters} \label{sec:estim}

In this section, we develop estimators $\hat \gamma$ and $\hat \kappa$ of
the unknown parameters $\kappa$ and $\gamma$ that determine the shrinkage
parameter $\beta_o$ (cf.  Theorem~\ref{th:beta0_ELL}). These are used to
obtain a \emph{plug-in estimators} of the shrinkage parameters as 
\begin{align*}
	\hat \beta_o^{\textup{Ell}}
	&=  \dfrac{( \hat \gamma-1)}
	{(   \hat \gamma - 1) + \hat \ka (2  \hat \gamma  + p)/\ndim + ( \hat \gamma  + p)/(n-1) }, 
	\\
	\hat \al_o^{\textup{Ell}} &= (1 - \hat \be_o^{\textup{Ell}}) \hat\eta.
\end{align*}

Next, we will address how to estimate the needed statistical parameters.
First, we will address the estimation of  $\kappa$. Regarding
$\gamma$, we found two different well performing estimators, and hence, we
will address its estimation last.

A natural estimate of $\kappa$ is the conventional sample average
\begin{equation}\label{eq:kappahat}
	\hat \kappa = \max \Bigg( - \frac{2}{p+2} \, , \,
	\frac{1}{3 \pdim} \sum_{j=1}^\pdim \hat K_j \Bigg),
\end{equation}
where $\hat K_j$ is an estimate of the kurtosis of the $j$th variable and 
defined as
\[
	\hat K_j = \frac{\ndim-1}{(\ndim-2)(\ndim-3)}
	\left( (\ndim +1) \hat k_j + 6 \right).
\]
Here $\hat k_j = m^{(4)}_j/ \big(m_j^{(2)} \big)^2 - 3$ denotes the
conventional sample estimate of the kurtosis of the $j$th variable, where
$m^{(q)}_j = \frac 1 \ndim \sum_{i=1}^\ndim (x_{ij} - \bar x_j)^q$ denotes
the $q$th order sample moment. The estimate $\hat K_j$ is a commonly used
estimate of the kurtosis which is based on the relationship between the
kurtosis and the cumulants of the distribution~\cite{joanes1998comparing}.
It corrects for the bias of the conventional sample kurtosis $\hat k_j$. To
ensure that the final estimate $\hat \kappa$ does not go below the
theoretical lower bound of $-2/(p+2)$~\cite{bentler1986greatest}, a maximum
constraint is used in~\eqref{eq:kappahat}. The constructed estimate of
$\kappa$ is consistent both in the conventional and the RMT regime. 

Note that, if the estimates $(\hat \gamma, \hat \kappa)$ are restricted to
be within their theoretical ranges, i.e., $1 \leq \hat \gamma \leq p$ and
$\hat\kappa \geq -2/(p+2)$, then it is straightforward to verify that the
plug-in estimator satisfies $\hat \beta_o^\textup{Ell} \in [0,1)$. 

In the following subsections, we consider two options for estimating the
sphericity $\gamma$ under the RMT regime. We denote the estimators
by $\hat \gamma^{\text{Ell1}}$ and $\hat \gamma^{\text{Ell2}}$. Both
estimators have their own benefits and disadvantages. The first estimator,
$\hat \gamma^{\text{Ell1}}$, enjoys statistical robustness with respect to
heavier-tailed distributions. The second estimator, $\hat
\gamma^{\text{Ell2}}$, is computationally more efficient and can easily be
used and tuned for very high-dimensional set-ups such as microarray studies
where $\pdim$ is often tens of thousands but $\ndim$ is of few tens
\cite{tabassum2018compressive}. It is also highly efficient under
Gaussianity, or for mild departures from Gaussianity. Its obvious
disadvantage is that it is not very efficient for heavier-tailed elliptical
distributions.

\subsection{Ell1-RSCM estimator}

The first estimator of the sphericity $\gamma$,
uses the \emph{sample spatial sign covariance matrix},  defined as
\begin{equation} \label{eq:sgn}
	\S_{\mathrm{sgn}} = \frac{1}{\ndim} \sum_{i=1}^\ndim
	\frac{(\x_i - \hat \bmu) (\x_i - \hat \bmu)^\top}
	     {\| \x_i - \hat \bmu\|^2},
\end{equation}
where $\hat \bmu = \arg \min_{\bmu} \sum_{i=1}^{n} \| \x_{i} - \bmu \|$ is
the \emph{sample spatial median}~\cite{brown1983statistical}. The sample
sign covariance matrix is well-known to be highly robust although it is not
a consistent estimator of the covariance
matrix~\cite{croux2002sign,magyar2014asymptotic}. Namely, it does provide
consistent estimators of the eigenvectors of the covariance matrix but not
of the eigenvalues. 

Consider an estimator of the form, 
\begin{align*} 
	\hat \gamma^{\text{Ell1}*}  & =  \frac{\ndim}{ \ndim -1}  \, \left(
	\pdim \tr \big(\S_{\mathrm{sgn}}^2 \big) - \frac{\pdim}{\ndim}  \right)
	\\ 
	& =  \frac{ p }{n(n-1)}  \sum_{i \neq j} (\v_i^\top \v_j)^2  \\  
	& = p  \, \mathrm{ave}_{i\neq j}  \{ \cos^2(\sphericalangle(\x_i, \x_j) ) \}
\end{align*} 
where $\mathrm{ave}_{i,j}$ denotes arithmetic average over indices, $i,j \in
\{ 1, \ldots,n \}$, $i \neq j$, and $\v_i =  (\x_i - \hat \bmu)/\| \x_i  -
\hat \bmu\|$. 

In~\cite[Lemma~4.1]{zhang2016automatic}  it was shown that $\hat
\gamma^{\text{Ell1}*} $ is a consistent estimator of $\gamma$  when sampling
from a centered elliptical distribution $\mathcal E_\pdim(\bo 0,\M,g)$ under
the RMT regime~(R1) and when the eigenvalues of $\M$ converge to a fixed
spectrum such that
\begin{equation}
\begin{aligned}
	\tr(\M^i)/p \text{ has a finite and positive limit } \\ \text{for}~i=1,2,3,4~\text{as}~ p \to \infty.
\end{aligned}  \tag{R2}
	\end{equation}
In their paper, it was assumed that the location (symmetry center) is known
to be zero, which is why they do not have the centering of the samples by
the sample spatial median in~\eqref{eq:sgn}. We also remark that  our
estimator $\hat \gamma^{\text{Ell1}*}$ differs from
\cite[Lemma~4.1]{zhang2016automatic} in that we scale their estimator by $
\frac{\ndim}{ \ndim -1}$. This scaling is used for correcting bias for small
samples and needed to ensure that $\E[\hat \gamma^{\text{Ell1*}}]  \in
[1,p]$. In order to guarantee that the estimate remains inside the
valid interval $[1,p]$, as a final estimator, we use 
\begin{equation} \label{eq:gammahatEll1}
	\hat \gamma^{\text{Ell1}} =\min(p,\max(1,\hat\gamma^{\text{Ell1*}})).
\end{equation}
We can now define the \emph{Ell1-RSCM} estimator as the RSCM based on the
estimators of the optimal shrinkage parameters using the plugin estimates
$\hat\eta$ of~\eqref{eq:etahat}, $\hat\kappa$ of~\eqref{eq:kappahat} and
$\hat \gamma^{\text{Ell1}}$ of~\eqref{eq:gammahatEll1}.

\subsection{Ell2-RSCM estimator}
In order to develop the second estimator $\hat\gamma^{\text{Ell2}}$ of
$\gamma$, we need to find the values of $\E[\tr(\S^2)]$ and $\E[\tr(\S)^2]$,
which are given in the following Lemma~\ref{lem:trace_S}.

\begin{lemma} \label{lem:trace_S}
Let $\x_1,\ldots, \x_\ndim$ denote an i.i.d.\ random sample from an
elliptical distribution with finite fourth order moments, mean vector
$\bmu$, and covariance matrix $\M$. Then,
\begin{align*}
	\E&[\tr(\S^2)] = \\ 
	&\left( \frac{1}{\ndim-1} + \frac{\ka}{\ndim} \right) \tr(\M)^2 
	+ \left(1+\frac{1}{\ndim-1} +  \frac{ 2\ka}{\ndim} \right) \tr(\M^2) 
\end{align*}
and
\begin{align*}
	\E[\tr(\S)^2] 
	&= \bigg(1+ \frac{\ka}{\ndim} \bigg)\tr(\M)^2  +  
	2 \left( \frac{1}{\ndim-1} + \frac{\ka}{\ndim} \right) \tr(\M^2) 
\end{align*}
\end{lemma}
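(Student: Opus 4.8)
The plan is to reduce both expectations to quantities already in hand, using the scalar identity $\E[Y^2] = \var(Y) + (\E[Y])^2$ together with the unbiasedness $\E[\S] = \M$ and the covariance formula of Theorem~\ref{th:covS}. Both $\tr(\S^2)$ and $\tr(\S)^2$ are quadratic in $\S$, so after expressing them as quadratic forms in $\ve(\S)$ they become linear functionals of $\var(\ve(\S))$, which is exactly what Theorem~\ref{th:covS} supplies.

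For $\E[\tr(\S^2)]$, since $\S$ is symmetric I would write $\tr(\S^2) = \tr(\S^\top \S) = \ve(\S)^\top \ve(\S)$ via the identity $\tr(\A^\top \B) = \ve(\A)^\top \ve(\B)$. Taking expectations and applying $\E[\v^\top \v] = \tr(\var(\v)) + \E[\v]^\top \E[\v]$ with $\v = \ve(\S)$ gives $\E[\tr(\S^2)] = \tr(\var(\ve(\S))) + \ve(\M)^\top \ve(\M)$. The first term is precisely $\MSE(\S)$, as established in the proof of Lemma~\ref{lem:NMSE} where $\tr(\var(\ve(\S))) = \MSE(\S)$, and the second equals $\tr(\M^2)$. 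Adding $\tr(\M^2)$ to the $\MSE(\S)$ expression of Lemma~\ref{lem:NMSE} then produces the claimed formula, the net effect being that the coefficient of $\tr(\M^2)$ picks up an extra $+1$.

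For $\E[\tr(\S)^2]$, I would regard $\tr(\S)$ as the scalar $\ve(\I)^\top \ve(\S)$ and write $\E[\tr(\S)^2] = \var(\tr(\S)) + (\E[\tr(\S)])^2$. Since $\E[\tr(\S)] = \tr(\M)$, the second term is $\tr(\M)^2$, while the variance is the quadratic form $\var(\tr(\S)) = \ve(\I)^\top \var(\ve(\S)) \ve(\I)$, into which I substitute Theorem~\ref{th:covS}. The evaluation relies on the matrix identities $(\M \otimes \M)\ve(\I) = \ve(\M^2)$ (from $\ve(\A\X\B) = (\B^\top \otimes \A)\ve(\X)$), $\commat \ve(\M^2) = \ve(\M^2)$ by symmetry of $\M^2$, and $\ve(\I)^\top \ve(\M^2) = \tr(\M^2)$, together with $\ve(\M)^\top \ve(\I) = \tr(\M)$ for the rank-one piece. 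These collapse the two structured terms to $\ve(\I)^\top (\I + \commat)(\M \otimes \M)\ve(\I) = 2\tr(\M^2)$ and $\ve(\I)^\top \ve(\M)\ve(\M)^\top \ve(\I) = \tr(\M)^2$, yielding $\var(\tr(\S)) = 2\big(\tfrac{1}{\ndim-1} + \tfrac{\ka}{\ndim}\big)\tr(\M^2) + \tfrac{\ka}{\ndim}\tr(\M)^2$; combining with the $\tr(\M)^2$ term gives the stated result.

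The main obstacle is purely mechanical: applying the commutation- and Kronecker-matrix identities correctly so that each structured summand reduces to a clean trace. There is no conceptual difficulty once Theorem~\ref{th:covS} is available, since both claims are simply linear read-outs of the already-known $\var(\ve(\S))$.
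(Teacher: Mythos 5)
Your proof is correct and takes essentially the same route as the paper's: the first identity is obtained exactly as in the paper, via $\E[\tr(\S^2)] = \MSE(\S) + \tr(\M^2)$ combined with Lemma~\ref{lem:NMSE}, and your second computation, $\var(\tr(\S)) = \ve(\I)^\top \var(\ve(\S))\,\ve(\I)$ evaluated with Theorem~\ref{th:covS}, is just the vectorized form of the paper's entrywise expansion $\sum_{i,j}\cov(s_{ii},s_{jj})$ extracted with the vectors $\e_i \otimes \e_i$, since $\ve(\I) = \sum_i \e_i \otimes \e_i$. All the Kronecker and commutation identities you invoke are valid, and the resulting coefficients match the lemma, so there is no gap.
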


\begin{proof} 
The first statement for $\E[\tr(\S^2)]$ follows from Lemma~\ref{lem:NMSE} by
noting that $\E[\tr(\S^2)] = \MSE(\S) + \tr(\M^2)$, which was shown
in~\eqref{eq:a1}.

Regarding the second statement, we first write
\begin{align*}
	\E\left[\tr(\S)^2\right]
	&= \E\Big[\sum_i s_{ii} \sum_j s_{jj}\Big]
	= \sum_{i,j}\E\left[s_{ii}s_{jj}\right]
	\\
	&= \sum_{i,j} \left(\cov\left(s_{ii},s_{jj}\right) +
	\E\left[s_{ii}\right]\E\left[s_{jj}\right]\right).
\end{align*}
Here, the covariance of $s_{ii}$ and $s_{jj}$ is the $ij$th element of the
$ij$th block of $\var\left(\veco{\S}\right)$ in~\eqref{eq:cov_vecS} since
\begin{align*}
	\cov(s_{ii},s_{jj})
	&= \cov(\e_i^\top \S \e_i , \e_j^\top \S \e_j)
	\\
	&= \cov\left((\e_i \otimes \e_i)^\top \ve(\S)  , (\e_j \otimes \e_j)^\top
	\ve(\S)\right)
	\\
	&=
	(\e_i^\top \otimes \e_i^\top)
	\var(\veco{\S})
	(\e_j \otimes \e_j).
\end{align*}
Using the following identities:
\begin{align*}
	(\e_i^\top \otimes \e_i^\top)
	(\M \otimes \M)
	(\e_j \otimes \e_j)
    &=
    \e_i^\top \M \e_j \cdot \e_i^\top \M \e_j,
    \\
	(\e_i^\top \otimes \e_i^\top) \commat(\M \otimes \M) (\e_j \otimes
	\e_j)
    &=
    \e_i^\top \M \e_j \cdot \e_i^\top \M \e_j,
    \\
	(\e_i^\top \otimes \e_i^\top)
	\veco{\M}\veco{\M}^\top
	(\e_j \otimes \e_j)
    &=
	\e_i^\top \M \e_i \cdot \e_j^\top \M \e_j,
\end{align*}
and the fact that $\S$ is unbiased, i.e., $\E[s_{ii}] = \e_i^\top \M \e_i$,
we can write using \eqref{eq:cov_vecS} that 
\begin{align*}
	\E[s_{ii}s_{jj}]
	&= \cov(s_{ii},s_{jj}) + \E[s_{ii}]\E[s_{jj}]
	\\
	&=
    2\tau_1
	(\e_i^\top \M \e_j \cdot \e_i^\top \M \e_j )
	+
    (1+\tau_2)
	(\e_i^\top \M \e_i \cdot \e_j^\top \M \e_j),
\end{align*}
where $\tau_1$ and $\tau_2$ are given in~\eqref{eq:tau1} and
\eqref{eq:tau2}, respectively. By summing all $i$ and $j$, we have
\begin{align*}
	\E[\tr(\S)^2] = \sum_{i,j} \E[s_{ii}s_{jj}]
	&=
    2\tau_1
	\tr(\M^2)
	+
    (1+\tau_2)
	\tr(\M)^2,
\end{align*}
which completes the proof.
\end{proof}

Next, we construct an estimator for $\etatwo = \tr(\M^2)/p$. The natural
plug-in estimate, $ \tr(\S^2)/\pdim$, is not a consistent estimator in the
RMT regime (R1) and (R2). This follows at once from Lemma~\ref{lem:trace_S}
as  it shows that $\tr(\S^2)/\pdim$ is not asymptotically unbiased since 
\begin{align*}
	\lim_{\substack{n,p\to \infty \\ p/n \to c_o}} \frac{\E[\tr(\S^2)]}{\pdim}
	&=	c_0(1+\kappa)\eta^2_o + \etatwo_0, 
\end{align*}
where $\eta_o>0$ and  $\etatwo_o>0$ denote finite limit values of
$\tr(\M)/p$ and $\tr(\M^2)/p$, respectively, as $p\to \infty$. 

In the next Theorem~\ref{th:eta2ell}, a proper estimator
$\hat\etatwo$ of $\etatwo$ under the RMT regime is developed. Theorem~\ref{th:eta2ell}
extends~\cite[Lemma~2.1]{srivastava2005some} to the elliptical case.

\begin{theorem}\label{th:eta2ell} 
	Let $\x_1,\ldots, \x_\ndim$ denote an i.i.d.\ random sample from a $p$-variate 
	elliptical distribution with finite fourth order moments, mean vector
	$\bmu$, and covariance matrix $\M$. Then, an unbiased
	estimate of $ \etatwo = \tr(\M^2)/p$ for any finite $n$ and any
	$p$ is
  \begin{align*}
	  \hat \etatwo & =   b_n \left(\frac{\tr(\S^2)}{p} -  a_n \, \frac{p}{n}
	  \, \left[ \frac{\tr(\S)}{p} \right]^2 \right)   
  \end{align*}
  \vspace{-0.1cm}
where 
\begin{align*}
	a_n &=  \left(\frac{n}{n+\kappa} \right) \left(  \frac{n}{n-1} + \kappa \right)  \\ 
	b_n &= \frac{ (\ka  + n)(n-1)^2}{ (n-2)(3 \kappa (n-1) + n(n+1))}. 
\end{align*}
Furtheremore, under the RMT regime (R1) and (R2), the estimator is
asymptotically unbiased, i.e., $\E[\hat \etatwo] \to \etatwo_o$, where
$\etatwo_o>0$ denotes the finite limit of  $\etatwo$ as $p \to \infty$. 
\end{theorem}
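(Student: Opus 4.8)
The plan is to compute $\E[\hat\etatwo]$ directly by linearity and then invoke the two exact moment identities of Lemma~\ref{lem:trace_S}. Rewriting the estimator as
\[
\hat\etatwo = b_n\left(\frac{\tr(\S^2)}{p} - \frac{a_n}{np}\,\tr(\S)^2\right),
\]
the only randomness enters through $\tr(\S^2)$ and $\tr(\S)^2$, and Lemma~\ref{lem:trace_S} expresses $\E[\tr(\S^2)]$ and $\E[\tr(\S)^2]$ as linear combinations of the \emph{same} two deterministic quantities, $\tr(\M)^2$ and $\tr(\M^2)$. Consequently $\E[\hat\etatwo] = \lambda\,\tr(\M)^2 + \nu\,\tr(\M^2)$ for scalars $\lambda,\nu$ depending only on $n$, $p$ and $\ka$, and the entire proof reduces to checking that $a_n$ and $b_n$ are chosen so that $\lambda=0$ and $\nu = 1/p$; this yields $\E[\hat\etatwo] = \tr(\M^2)/p = \etatwo$.

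First I would isolate the coefficient of $\tr(\M)^2$. Before multiplying by the overall factor $b_n/p$, Lemma~\ref{lem:trace_S} gives this coefficient as
\[
\left(\frac{1}{n-1} + \frac{\ka}{n}\right) - \frac{a_n}{n}\left(1 + \frac{\ka}{n}\right),
\]
where the first term comes from $\E[\tr(\S^2)]$ and the second from $\E[\tr(\S)^2]$. Using $1+\ka/n = (n+\ka)/n$ together with $a_n = \frac{n}{n+\ka}\big(\frac{n}{n-1}+\ka\big)$, the factor $(n+\ka)$ cancels and the subtracted term collapses to exactly $\frac{1}{n-1}+\frac{\ka}{n}$. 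Hence the coefficient of $\tr(\M)^2$ vanishes identically; this is precisely the design purpose of $a_n$, namely to cancel the scale-squared bias $\tr(\M)^2=p^2\eta^2$ that makes the naive plug-in $\tr(\S^2)/p$ inconsistent.

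With the $\tr(\M)^2$ term eliminated, only the coefficient of $\tr(\M^2)$ remains. Before the factor $b_n/p$ it equals
\[
\left(1 + \frac{1}{n-1} + \frac{2\ka}{n}\right) - \frac{2a_n}{n}\left(\frac{1}{n-1}+\frac{\ka}{n}\right),
\]
and the claim is that this expression is exactly $1/b_n$, so that multiplying by $b_n/p$ produces $\E[\hat\etatwo]=\tr(\M^2)/p=\etatwo$. I would substitute $a_n$, place everything over the common denominator $n(n-1)^2(n+\ka)$, and simplify the numerator; it should factor as $(n-2)\big(3\ka(n-1)+n(n+1)\big)$, giving the value $\tfrac{(n-2)(3\ka(n-1)+n(n+1))}{(\ka+n)(n-1)^2}$, whose reciprocal is precisely $b_n$. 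As a consistency check, at $\ka=0$ this coefficient is $(n-2)(n+1)/(n-1)^2$ and $b_n=(n-1)^2/\big((n-2)(n+1)\big)$, confirming the reciprocal relation. This simplification is the main and essentially only obstacle: it is routine algebra but demands careful bookkeeping of the $\ka$-dependent terms.

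Finally, the asymptotic claim is immediate. Since the identity $\E[\hat\etatwo]=\etatwo$ holds exactly for every finite $n$ and $p$, assumption (R2) guarantees that $\etatwo=\tr(\M^2)/p$ tends to the finite positive limit $\etatwo_o$ as $p\to\infty$, whence $\E[\hat\etatwo]=\etatwo\to\etatwo_o$ under (R1)--(R2) with nothing further to verify.
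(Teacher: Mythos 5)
Your proposal is correct and follows essentially the same route as the paper's proof: both apply Lemma~\ref{lem:trace_S}, choose $a_n$ to annihilate the $\tr(\M)^2$ coefficient (the paper writes this as $a_n = n\tau_1/(1+\tau_2)$), and identify $b_n$ as the reciprocal of the remaining $\tr(\M^2)$ coefficient, with the asymptotic claim following immediately from exact unbiasedness and (R2). The only difference is cosmetic: the paper states the choices abstractly in terms of $\tau_1,\tau_2$ and leaves the closed-form simplification unstated, whereas you carry out (correctly) the factorization $n(n-2)\bigl(n(n+1)+3\ka(n-1)\bigr)$ explicitly.
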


\begin{proof}
Using Lemma~\ref{lem:trace_S} write 
\begin{align*}
	&b_n^{-1} p \E[\hat\etatwo]
	=
	\\
	&\left(\tau_1-\frac{a_n}{n}(1+\tau_2)\right)\tr(\M)^2 
	+ \left(1+\tau_1+\tau_2-2\tau_1 \frac{a_n}{n}\right)\tr(\M^2),
\end{align*} 
where $\tau_1$ and $\tau_2$ are defined in~\eqref{eq:tau1}
and~\eqref{eq:tau2}. By choosing $a_n= n \tau_1/(1+ \tau_2)$, we see
that $b_n = ( 1+\tau_1+\tau_2 - 2\tau_1 a_n/n  \big)^{-1}$. The terms
$a_n$ and $b_n$ can equivalently be expressed in the form given in the
theorem by using the equations for $\tau_1$ and $\tau_2$. The last
statement is a consequence of the fact that $\etatwo$ converges to a
finite limit value as $\pdim \to \infty$ and that $\E[\hat\etatwo] =
\etatwo$.
\end{proof}

Note that $a_n$ and $b_n$ depend on the elliptical distribution via the
elliptical kurtosis parameter $\ka$. Using the estimate of the kurtosis by
defining $\hat a_n = a_n(\hat \kappa)$ and $\hat b_n = b_n(\hat \ka)$ one
obtains an estimate of $\etatwo$ which does not require knowing the
underlying elliptical distribution. Thus based on Theorem~\ref{th:eta2ell},
we propose an estimator of the form 
\begin{equation}\label{eq:hatgamma_ell2}
	\hat \gamma^{\textup{Ell2*}} 
	= \hat b_n \left( \frac{p\tr(\S^2)}{\tr(\S)^2} - \hat a_n c \right).
\end{equation}
Note that, if $n$ is  reasonably large (e.g., $n>100$), then $\hat a_n
\approx 1+  \hat \ka$ and $b_n \approx 1$ and then one may use
\[
	\hat \gamma^{\textup{Ell2}*}  =  \left( \frac{p\tr(\S^2)}{\tr(\S)^2} -
    (1+ \hat \kappa) c \right). 
 \]
In order to guarantee that the estimator remains in the valid interval, $1
\leq \gamma \leq p$, we use 
\begin{equation}\label{eq:hatgamma_ell2_eq2}
   \hat \gamma^{\textup{Ell2}}  
   = \min(p, \max(1, \hat \gamma^{\textup{Ell2*}})).
\end{equation}
as our final estimator. We can now define the \emph{Ell2-RSCM} estimator as
the RSCM, which uses $\hat \eta$ of~\eqref{eq:etahat}, $\hat\kappa$
of~\eqref{eq:kappahat}, and $\hat \gamma^{\text{Ell2}}$
of~\eqref{eq:hatgamma_ell2_eq2} as plug-in estimates for the optimal
shrinkage parameters.

Finally, we wish to note that albeit $\hat \gamma^{\textup{Ell2}}$ does not
require knowledge of the underlying elliptically symmetric distribution of
the data, it is not a robust estimator. This is due to the fact that
$\tr(\S^2)$ contains $4$th order moments of the data, and $8$th order
moments of the elliptically symmetric distribution needs to exists in order
for $\tr(\S^2)$ to be asymptotically normal. Consequently, the Ell2-RSCM
estimator is not well suited for heavier-tailed distributions. 

\subsection{Ell3-RSCM estimator}

Ell3-RSCM is a hybrid of the Ell1-RSCM and Ell2-RSCM estimators. The
Ell3-RSCM will use the estimator which has a smaller estimated sphericity
$\hat\gamma$. Thus, it will always favor more shrinkage over less
shrinkage. This rule can be summarized as: if
$\hat\gamma^{\textup{Ell1}}<\hat\gamma^{\textup{Ell2}}$, then choose
Ell1-RSCM, otherwise choose Ell2-RSCM.

\section{Simulation study} \label{sec:simul}

We conduct a small simulation study to investigate the performance of the
RSCM estimators in terms of their finite sample NMSE.\ Each simulation is
repeated 10000 times and the NMSE is averaged over the Monte-Carlo runs for
each RSCM estimator. The theoretical oracle MSE value derived
in~\eqref{eq:MSEopt} is normalized by $\| \M \|^2_{\Fr}$ and used as a
benchmark for the empirical NMSE values, which is shown in the figures as a
solid black line. The mean vector $\bmu$ is fixed for each MC trial and
generated randomly as $\{\mu_i \}_{i=1}^\pdim \iidsim \mathcal N(0,4)$.

\subsection{AR(1) covariance matrix}\label{sec:AR1}

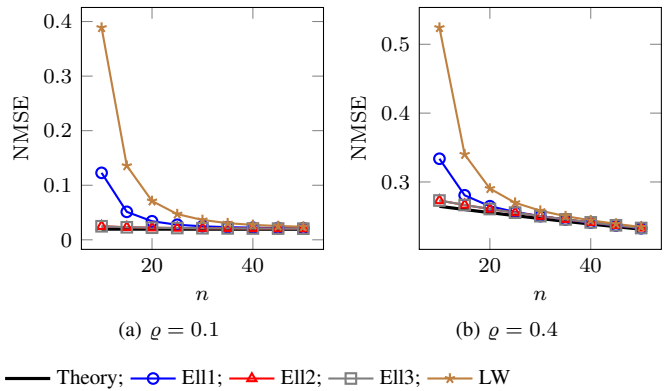
\begin{figure}[t]
	\centering
    \subfloat[$\varrho=0.1$]{%
        \begin{tikzpicture}[scale=1]
            \begin{axis}[mystyle,
                xlabel = {$n$}, 
                ylabel = {$\mathrm{NMSE}$},
                ]
                \pgfplotstableread{AR1MSErho0p1vInf.dat}\loadedtable
                \addplot[black,very thick] table [x=nlist, y=NMSEtheory]{\loadedtable};
                \addplot[blue,mark=o,thick] table [x=nlist, y=NMSEell1]{\loadedtable};
                \addplot[red,mark=triangle,thick] table [x=nlist, y=NMSEell2]{\loadedtable};
                \addplot[gray,mark=square,thick] table [x=nlist, y=NMSEell3]{\loadedtable};
                \addplot[brown,mark=star,thick] table [x=nlist, y=NMSElwe]{\loadedtable};
                \legend{Theory;,Ell1;,Ell2;,Ell3;,LW};
            \end{axis}
        \end{tikzpicture}
    }
	\hfill
    \subfloat[$\varrho=0.4$]{%
        \begin{tikzpicture}[scale=1]
            \begin{axis}[mystyle,
                xlabel = {$n$},
                ylabel = {$\mathrm{NMSE}$},
                ]
                \pgfplotstableread{AR1MSErho0p4vInf.dat}\loadedtable
                \addplot[black,very thick] table [x=nlist, y=NMSEtheory]{\loadedtable};
                \addplot[blue,mark=o,thick] table [x=nlist, y=NMSEell1]{\loadedtable};
                \addplot[red,mark=triangle,thick] table [x=nlist, y=NMSEell2]{\loadedtable};
                \addplot[gray,mark=square,thick] table [x=nlist, y=NMSEell3]{\loadedtable};
                \addplot[brown,mark=star,thick] table [x=nlist, y=NMSElwe]{\loadedtable};
            \end{axis}
        \end{tikzpicture}
    }
    \vspace{2em}
    \caption{AR(1) process: comparison of covariance estimators when $\pdim =
        100$, $\varrho \in \{0.1, 0.4\}$, and the samples are from a Gaussian
    distribution.}\label{fig:AR1_p100}
\end{figure}

In the first experiment, an autoregressive covariance structure is used.
We let $\M$ be the covariance matrix of a Gaussian AR$(1)$ process,
\[
	(\M)_{ij} = \varrho^{|i-j|}, \quad~\text{where}~\varrho \in (0,1).
\]
Note that, $\M$ verifies $\eta = \tr(\M)/\pdim = 1$. Also, when $\varrho
\downarrow 0$, then $\M$ is close to an identity matrix, and when $\varrho
\uparrow 1$, $\M$ tends to a singular matrix of rank 1. 
The dimension is fixed at $\pdim = 100$ and $\ndim$ varies from 10 to 50
in steps of 5 samples. Figure~\ref{fig:AR1_p100} depicts the NMSE
performance as a function of sample length $n$ when the samples were drawn
from a Gaussian distribution.

It can be noted that when the sample sizes were small, both the Ell1-RSCM
estimator and the Ell2-RSCM estimator outperformed the LW-RSCM estimator
with a significant margin. We also notice that the performance of the
Ell2-RSCM and Ell3-RSCM estimators were almost overlapping with the
theoretical optimal value for all values of $n$ and for both values of
$\varrho$.

Next, we consider heavier-tailed distributions than the Gaussian. Namely,
the $t_\nu$-distribution with $\nu=12$ and $\nu=7$ degrees of freedom; the
kurtosis of the marginal variable being $\mathrm{kurt}(x_i)= 0.75$ and
$\mathrm{kurt}(x_i)=2$, respectively. The results are given in
Figure~\ref{fig:AR1_p100_tsim}. 

First, we notice that Ell1-RSCM and Ell3-RSCM outperformed Ell2-RSCM and 
LW-RSCM for all values of $c=\pdim/\ndim$, $\nu$ and $\rho$.  In the
case of $\nu=7$, the performance of the Ell2-RSCM estimator declined due to
its non-robustness, and it is performing the worst among the shrinkage
estimators. In the case of $\nu=12$, the LW-RSCM estimator and the Ell2-RSCM
estimator had similar performances for larger values of $n$, but Ell2-RSCM
performed better at small values of $n$. Since Ell1-RSCM and Ell2-RSCM
differ only in the way they estimate the sphericity $\gamma$, the
performance loss of Ell2-RSCM over Ell1-RSCM can be attributed to
a larger variability and the non-robustness of the estimator $\hat
\gamma^{\text{Ell2}}$ as compared to $\hat \gamma^{\text{Ell1}}$. Also note
that when the samples were drawn from the $t_7$-distribution, the
performance loss of LW-RSCM to Ell1-RSCM and Ell3-RSCM increased.
Indeed, this difference in performance can be attributed to better
robustness properties of the Ell1-RSCM estimator over the LW-RSCM estimator
when sampling from a heavier-tailed elliptical distribution.

\begin{figure}[t]
    \centering
    \subfloat[$\varrho=0.1$ and $\nu=12$]{%
        \begin{tikzpicture}[scale=1]
            \begin{axis}[mystyle,
                xlabel = {$n$},
                ylabel = {$\mathrm{NMSE}$},
                ]
                \pgfplotstableread{AR1MSErho0p1v12.dat}\loadedtable
                \addplot[black,very thick] table [x=nlist, y=NMSEtheory]{\loadedtable};
                \addplot[blue,mark=o,thick] table [x=nlist, y=NMSEell1]{\loadedtable};
                \addplot[red,mark=triangle,thick] table [x=nlist, y=NMSEell2]{\loadedtable};
                \addplot[gray,mark=square,thick] table [x=nlist, y=NMSEell3]{\loadedtable};
                \addplot[brown,mark=star,thick] table [x=nlist, y=NMSElwe]{\loadedtable};
            \end{axis}
        \end{tikzpicture}
    }
	\hfill
    \subfloat[$\varrho=0.4$ and $\nu=12$]{%
        \begin{tikzpicture}[scale=1]
            \begin{axis}[mystyle,
                xlabel = {$n$},
                ylabel = {$\mathrm{NMSE}$},
                ]
                \pgfplotstableread{AR1MSErho0p4v12.dat}\loadedtable
                \addplot[black,very thick] table [x=nlist, y=NMSEtheory]{\loadedtable};
                \addplot[blue,mark=o,thick] table [x=nlist, y=NMSEell1]{\loadedtable};
                \addplot[red,mark=triangle,thick] table [x=nlist, y=NMSEell2]{\loadedtable};
                \addplot[gray,mark=square,thick] table [x=nlist, y=NMSEell3]{\loadedtable};
                \addplot[brown,mark=star,thick] table [x=nlist, y=NMSElwe]{\loadedtable};
            \end{axis}
        \end{tikzpicture}
    }
	\hfill
    \subfloat[$\varrho=0.1$ and $\nu=7$]{%
        \begin{tikzpicture}[scale=1]
            \begin{axis}[mystyle,
                    xlabel = {$n$},
                    ylabel = {$\mathrm{NMSE}$},
					ytick  = {0,0.2,0.4,0.6,0.8,1},
                ]
                \pgfplotstableread{AR1MSErho0p1v7.dat}\loadedtable
                \addplot[black,very thick] table [x=nlist, y=NMSEtheory]{\loadedtable};
                \addplot[blue,mark=o,thick] table [x=nlist, y=NMSEell1]{\loadedtable};
                \addplot[red,mark=triangle,thick] table [x=nlist, y=NMSEell2]{\loadedtable};
                \addplot[gray,mark=square,thick] table [x=nlist, y=NMSEell3]{\loadedtable};
                \addplot[brown,mark=star,thick] table [x=nlist, y=NMSElwe]{\loadedtable};
                \legend{Theory;,Ell1;,Ell2;,Ell3;,LW};
            \end{axis}
        \end{tikzpicture}
    }
	\hfill
    \subfloat[$\varrho=0.4$ and $\nu=7$]{%
        \begin{tikzpicture}[scale=1]
            \begin{axis}[mystyle,
                    xlabel = {$n$},
                    ylabel = {$\mathrm{NMSE}$},
                ]
                \pgfplotstableread{AR1MSErho0p4v7.dat}\loadedtable
                \addplot[black,very thick] table [x=nlist, y=NMSEtheory]{\loadedtable};
                \addplot[blue,mark=o,thick] table [x=nlist, y=NMSEell1]{\loadedtable};
                \addplot[red,mark=triangle,thick] table [x=nlist, y=NMSEell2]{\loadedtable};
                \addplot[gray,mark=square,thick] table [x=nlist, y=NMSEell3]{\loadedtable};
                \addplot[brown,mark=star,thick] table [x=nlist, y=NMSElwe]{\loadedtable};
            \end{axis}
        \end{tikzpicture}
    }
    \vspace{2em}
	\caption{AR(1) process: comparison of covariance estimators when $\pdim
		= 100$, $\varrho \in \{0.1, 0.4\}$, and the samples are drawn from a
    $t_\nu$-distribution.}\label{fig:AR1_p100_tsim}
\end{figure}
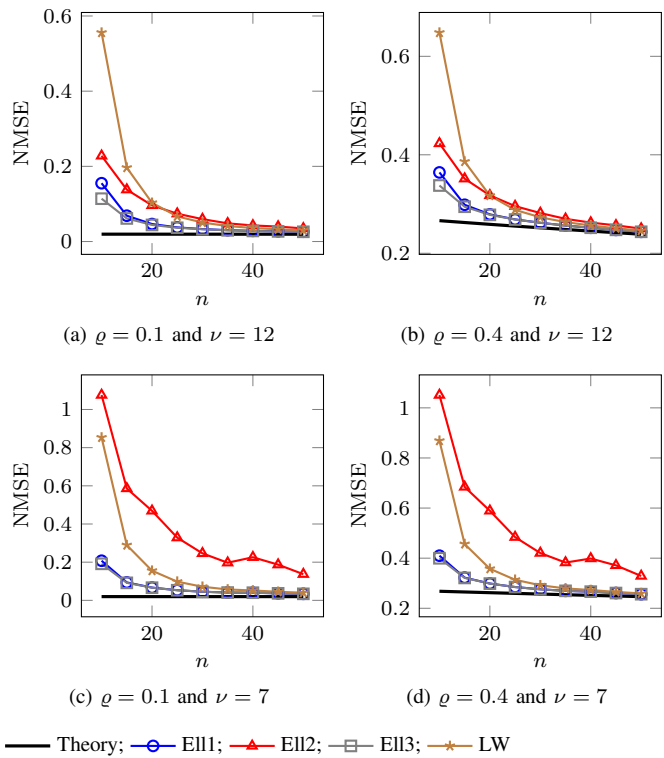

Finally, for the case $\rho=0.4$ and $n=20$, the Figure~\ref{fig:betaMSE}
depicts the theoretical NMSE of $\bo S_{(1-\be) \eta, \be}$ as a function of
$\beta$. Notice that the minimum NMSE is obtained at  $\be_o^{\textup{Ell}}$
which is shown by black vertical line. The average estimates of the optimal
value $\be_o^{\textup{Ell}}$ given by the different estimators are also
indicated by vertical lines. 

As can be seen, for Gaussian data, the Ell2-RSCM estimator of $\beta_o$ was
very close to the theoretical minimum, and significantly better than the
Ell1-RSCM estimator. The LW-RSCM estimator was far apart from the minimum
compared to Ell1-RSCM and Ell2-RSCM. In the case of the $t_\nu$-distribution
with $\nu=12$ degrees of freedom, the Ell1-RSCM estimator was performing
better than Ell2-RSCM due to its robustness, and both were significantly
closer to the minimum than the LW-RSCM estimator. 
 
\pgfplotstableread{theoreticalMSEvsBETAvInf.dat}\MSETableGaus
\pgfplotstableread{theoreticalMSEvsBETAvInfmeanBETAs.dat}\MeanBetaEllGaus
\pgfplotstablegetelem{0}{be_ora}\of{\MeanBetaEllGaus} \let\oraGaus = \pgfplotsretval
\pgfplotstablegetelem{0}{ell1}\of{\MeanBetaEllGaus}   \let\elloneGaus = \pgfplotsretval
\pgfplotstablegetelem{0}{ell2}\of{\MeanBetaEllGaus}   \let\elltwoGaus = \pgfplotsretval
\pgfplotstablegetelem{0}{ell3}\of{\MeanBetaEllGaus}   \let\ellthreeGaus = \pgfplotsretval
\pgfplotstablegetelem{0}{lwe}\of{\MeanBetaEllGaus}    \let\lweGaus = \pgfplotsretval
\pgfplotstableread{theoreticalMSEvsBETAv12.dat}\MSETable
\pgfplotstableread{theoreticalMSEvsBETAv12meanBETAs.dat}\MeanBetaEll
\pgfplotstablegetelem{0}{be_ora}\of{\MeanBetaEll} \let\ora = \pgfplotsretval
\pgfplotstablegetelem{0}{ell1}\of{\MeanBetaEll}   \let\ellone = \pgfplotsretval
\pgfplotstablegetelem{0}{ell2}\of{\MeanBetaEll}   \let\elltwo = \pgfplotsretval
\pgfplotstablegetelem{0}{ell3}\of{\MeanBetaEll}   \let\ellthree = \pgfplotsretval
\pgfplotstablegetelem{0}{lwe}\of{\MeanBetaEll}    \let\lwe = \pgfplotsretval

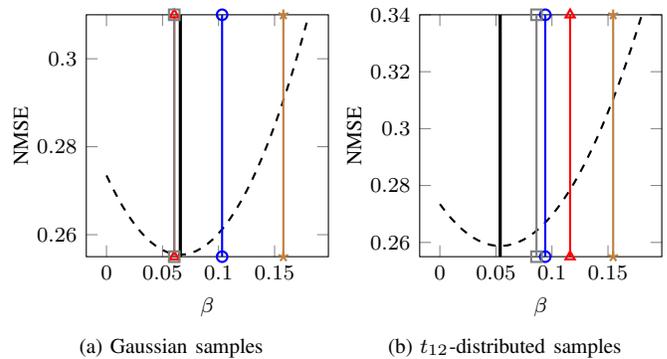
\begin{figure}
	\centering
	\subfloat[Gaussian samples]{%
		\begin{tikzpicture}[scale=1]
			\begin{axis}[
					xlabel=$\beta$,
					ylabel=NMSE,
					xticklabel style={/pgf/number format/fixed,},
					width=4.8cm, 
					height=4.8cm,
					ymin=0.255,
					ymax=0.31,
					xlabel near ticks,
					ylabel near ticks,
					tick label style={font=\footnotesize},
					label style={font=\footnotesize},
					every axis y label/.style={%
						at={(rel axis cs:-0.27,0.5)},			
						rotate=90,								
					font=\footnotesize},						
					legend style={font=\footnotesize,
						draw=none,								
						overlay,at={(-0.375,-.5)},				
					anchor=north west},							
					legend columns=-1,							
				]
				\def\ymin{0.255};
				\def\ymax{0.31};

				\addplot[black, dashed, thick] table [x=beta, y=NMSE_rscm]{\MSETableGaus};
				\addplot[black, very thick] coordinates 
					{(\oraGaus,\ymin) (\oraGaus,\ymax)};
				\addplot[blue, mark=o, thick] coordinates
					{(\elloneGaus,\ymin) (\elloneGaus,\ymax)};
				\addplot[red, mark=triangle, thick] coordinates
					{(\elltwoGaus,\ymin) (\elltwoGaus,\ymax)};
				\addplot[gray, mark=square, thick] coordinates
					{(\ellthreeGaus,\ymin) (\ellthreeGaus,\ymax)};
				\addplot[brown,mark=star, thick] coordinates 
					{(\lweGaus,\ymin) (\lweGaus,\ymax)};
				\legend{NMSE;, Theory;, Ell1;, Ell2;, Ell3;, LW};
			\end{axis}
		\end{tikzpicture}
	}
	\subfloat[$t_{12}$-distributed samples]{%
		\begin{tikzpicture}[scale=1]
			\begin{axis}[
					xlabel=$\beta$,
					ylabel=NMSE,
					xticklabel style={/pgf/number format/fixed,},
					width=4.8cm, 
					height=4.8cm,
					ymin=0.255,
					ymax=0.34,
					xlabel near ticks,
					ylabel near ticks,
					tick label style={font=\footnotesize},
					label style={font=\footnotesize},
					every axis y label/.style={%
						at={(rel axis cs:-0.27,0.5)},			
						rotate=90,								
					font=\footnotesize},						
					legend style={font=\footnotesize,
						draw=none,								
						overlay,at={(-0.3567,-.5)},				
					anchor=north west},							
					legend columns=-1,							
				]
				\def\ymin{0.255};
				\def\ymax{0.34};

				\addplot[black, dashed, thick] table [x=beta, y=NMSE_rscm]{\MSETable};
				\addplot[black, very thick] coordinates 
					{(\ora,\ymin) (\ora,\ymax)};
				\addplot[blue, mark=o, thick] coordinates
					{(\ellone,\ymin) (\ellone,\ymax)};
				\addplot[red, mark=triangle, thick] coordinates
					{(\elltwo,\ymin) (\elltwo,\ymax)};
				\addplot[gray, mark=square, thick] coordinates
					{(\ellthree,\ymin) (\ellthree,\ymax)};
				\addplot[brown,mark=star, thick] coordinates 
					{(\lwe,\ymin) (\lwe,\ymax)};
			\end{axis}
		\end{tikzpicture}
	}
	\vspace{2em}
	\caption{The theoretical NMSE of the shrinkage estimator
		$\S_{(1-\beta)\eta,\beta}$ as a function of $\beta$ when the
		covariance matrix has an AR$(1)$ structure with $\rho= 0.4$, $p =
		100$ and $n = 20$. The minimum  NMSE is obtained at
		$\be_o^{\textup{Ell}}$ which is indicated by a solid vertical line.
		The average estimated value of the shrinkage parameter $\beta$
		obtained by LW-, Ell1-, Ell2, and Ell3-estimators are shown.}
		\label{fig:betaMSE} 
\end{figure}

\subsection{Largely varying spectrum}

The next study follows the set-up in~\cite{zhang2016automatic}, where $\M$
has one (or a few) large eigenvalues. In the first set-up, $\M$ is a
diagonal matrix of size $50 \times 50$, where $m$ eigenvalues are equal to
$1$ and the remaining $50-m$ eigenvalues are equal to $0.01$. For the case
$\ndim =10$, Figure~\ref{fig:tengsim1} depicts the NMSE as a function of $m$
averaged over 10 000 Monte Carlo runs when sampling from a Gaussian
distribution and a $t_\nu$-distribution with $\nu=10$ degrees of freedom. 

In the Gaussian case, the Ell2-RSCM estimator had excellent performance as
its NMSE curve is essentially overlapping with the theoretical NMSE curve.
This is attested also in the right-hand side plot which depicts the graph of
the average estimate $\hat \be_o$ and the theoretical optimal value $\be_o$
as a function of $m$. As can be seen, the Ell2-RSCM estimator was
essentially performing at the oracle level, whereas the shrinkage parameter
corresponding to the LW-RSCM estimator was somewhat far from the theoretical
optimal. The NMSE curves show that the Ell2-RSCM estimator performed better
than the Ell1-RSCM estimator for Gaussian samples, however, with a rather
small margin. In the case of $t_{10}$-distribution, as expected, Ell1-RSCM
performed better than Ell2-RSCM due to its robustness in estimating the
sphericity. The hybrid estimator Ell3-RSCM was able to perform slightly
better than the other estimators in both cases.

\begin{figure}[t]
    \centering
    \subfloat[Gaussian samples]{%
        \begin{tikzpicture}[scale=1]
            \begin{axis}[mystyle,
                xlabel= {$m$},
                ylabel= {$\mathrm{NMSE}$},
                ]
                \pgfplotstableread{SPECTRUM1MSEvInf.dat}\loadedtable
                \addplot[black,very thick] table [x=mlist, y=NMSEtheory]{\loadedtable};
                \addplot[blue,mark=o,thick] table [x=mlist, y=NMSEell1]{\loadedtable};
                \addplot[red,mark=triangle,thick] table [x=mlist, y=NMSEell2]{\loadedtable};
                \addplot[gray,mark=square,thick] table [x=mlist, y=NMSEell3]{\loadedtable};
                \addplot[brown,mark=star,thick] table [x=mlist, y=NMSElwe]{\loadedtable};
            \end{axis}
        \end{tikzpicture}
		\hfill
        \begin{tikzpicture}[scale=1]
            \begin{axis}[mystyle,
                xlabel= {$m$},
                ylabel= {$\hat\beta$},
                ]
                \pgfplotstableread{SPECTRUM1BETAvInf.dat}\loadedtable
                \addplot[black,very thick] table [x=mlist, y=BETAtheory]{\loadedtable};
                \addplot[blue,mark=o,thick] table [x=mlist, y=meanBETAell1]{\loadedtable};
                \addplot[red,mark=triangle,thick] table [x=mlist, y=meanBETAell2]{\loadedtable};
                \addplot[gray,mark=square,thick] table [x=mlist, y=meanBETAell3]{\loadedtable};
                \addplot[brown,mark=star,thick] table [x=mlist, y=meanBETAlwe]{\loadedtable};
            \end{axis}
        \end{tikzpicture}
    }
	\hfill
    \subfloat[$t_{10}$-distributed samples]{%
        \begin{tikzpicture}[scale=1]
            \begin{axis}[mystyle,
                xlabel = {$m$},
                ylabel = {$\mathrm{NMSE}$},
				legend style={font=\footnotesize,
                  draw=none,					
                  overlay,at={(-0.3657,-.45)},	
                  anchor=north west},			
                ]
                \pgfplotstableread{SPECTRUM1MSEv10.dat}\loadedtable
                \addplot[black,very thick] table [x=mlist, y=NMSEtheory]{\loadedtable};
                \addplot[blue,mark=o,thick] table [x=mlist, y=NMSEell1]{\loadedtable};
                \addplot[red,mark=triangle,thick] table [x=mlist, y=NMSEell2]{\loadedtable};
                \addplot[gray,mark=square,thick] table [x=mlist, y=NMSEell3]{\loadedtable};
                \addplot[brown,mark=star,thick] table [x=mlist, y=NMSElwe]{\loadedtable};
                \legend{Theory;,Ell1;,Ell2;,Ell3;,LW};
            \end{axis}
        \end{tikzpicture}
		\hfill
        \begin{tikzpicture}[scale=1]
            \begin{axis}[mystyle,
                xlabel = {$m$},
                ylabel = {$\hat\beta$},
                ]
                \pgfplotstableread{SPECTRUM1BETAv10.dat}\loadedtable
                \addplot[black,very thick] table [x=mlist, y=BETAtheory]{\loadedtable};
                \addplot[blue,mark=o,thick] table [x=mlist, y=meanBETAell1]{\loadedtable};
                \addplot[red,mark=triangle,thick] table [x=mlist, y=meanBETAell2]{\loadedtable};
                \addplot[gray,mark=square,thick] table [x=mlist, y=meanBETAell3]{\loadedtable};
                \addplot[brown,mark=star,thick] table [x=mlist, y=meanBETAlwe]{\loadedtable};
            \end{axis}
        \end{tikzpicture}
    }
    \vspace{2em}
	\caption{The covariance matrix $\M$ has $m$ eigenvalues equal to $1$ and
	$50-m$ eigenvalues equal to $0.01$. Here $\pdim = 50$ and $\ndim=10$.}
    \label{fig:tengsim1}
\end{figure}
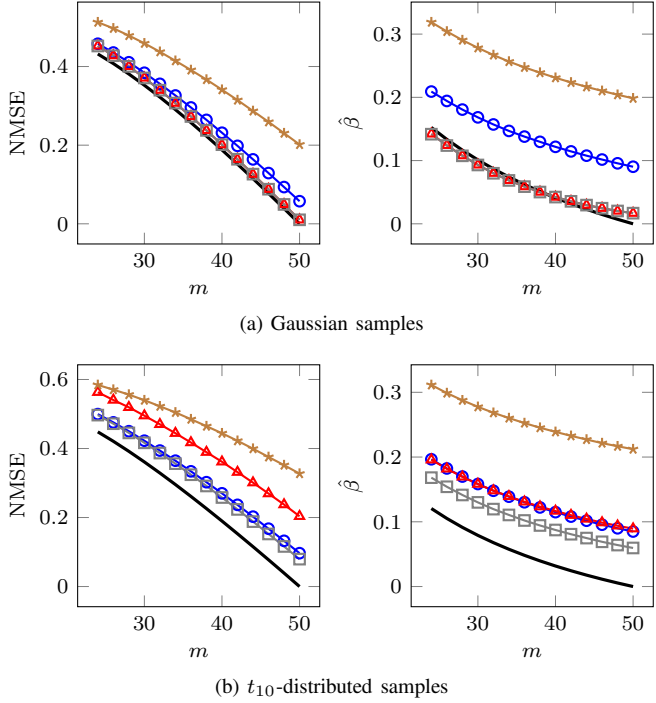

The next simulation set-up considers a very challenging scenario in which
the spectrum of $\M$ consists of several different eigenvalues. We consider
the case that $\pdim=100$ and the covariance matrix $\M$ has 30 eigenvalues
equal to $100$, $40$ eigenvalues equal to $1$, and 30 eigenvalues of $0.01$.
The samples were drawn from a Gaussian distribution and a
$t_\nu$-distribution with $\nu=10$ degrees of freedom. The NMSE curves are
plotted as a function of the sample length $\ndim$ in
Figure~\ref{fig:tengsim2}. 

It can be seen that under Gaussian sampling, the Ell2-RSCM and the Ell3-RSCM
estimators achieved near optimal performance for all $n$ considered. Indeed,
this behavior was already seen in the other simulation studies. The more
robust Ell1-RSCM estimator performed slightly worse than the Ell2-RSCM
estimator in the Gaussian case for small $n$. It can be noticed that the
performance of the LW-RSCM estimator degrades for small $n$. In the case
when the samples are from a $t_{10}$-distribution, we observe that the more
robust Ell1-RSCM estimator starts dominating the non-robust Ell2-RSCM
estimator. Again, we note that the Ell3-RSCM estimator performed the best.

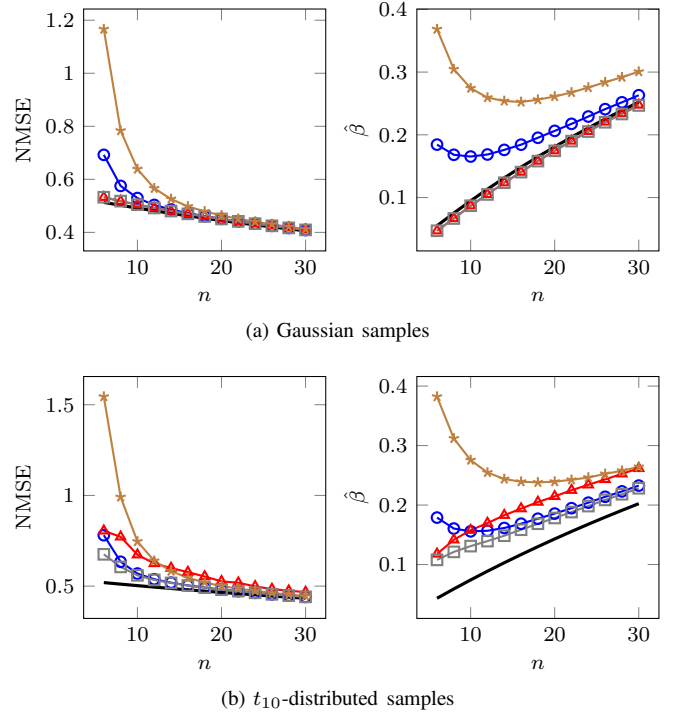
\begin{figure}[t]
    \centering
    \subfloat[Gaussian samples]{%
        \begin{tikzpicture}[scale=1]
            \begin{axis}[mystyle,
                xlabel = {$n$},
                ylabel = {$\mathrm{NMSE}$},
                ]
                \pgfplotstableread{SPECTRUM2MSEvInf.dat}\loadedtable
                \addplot[black,very thick] table [x=nlist, y=NMSEtheory]{\loadedtable};
                \addplot[blue,mark=o,thick] table [x=nlist, y=NMSEell1]{\loadedtable};
                \addplot[red,mark=triangle,thick] table [x=nlist, y=NMSEell2]{\loadedtable};
                \addplot[gray,mark=square,thick] table [x=nlist, y=NMSEell3]{\loadedtable};
                \addplot[brown,mark=star,thick] table [x=nlist, y=NMSElwe]{\loadedtable};
            \end{axis}
        \end{tikzpicture}
		\hfill
        \begin{tikzpicture}[scale=1]
            \begin{axis}[mystyle,
                xlabel = {$n$},
                ylabel = {$\hat\beta$},
                ]
                \pgfplotstableread{SPECTRUM2BETAvInf.dat}\loadedtable
                \addplot[black,very thick] table [x=nlist, y=BETAtheory]{\loadedtable};
                \addplot[blue,mark=o,thick] table [x=nlist, y=meanBETAell1]{\loadedtable};
                \addplot[red,mark=triangle,thick] table [x=nlist, y=meanBETAell2]{\loadedtable};
                \addplot[gray,mark=square,thick] table [x=nlist, y=meanBETAell3]{\loadedtable};
                \addplot[brown,mark=star,thick] table [x=nlist, y=meanBETAlwe]{\loadedtable};
            \end{axis}
        \end{tikzpicture}
    }
	\hfill
    \subfloat[$t_{10}$-distributed samples]{%
        \begin{tikzpicture}[scale=1]
            \begin{axis}[mystyle,
                xlabel = {$n$},
                ylabel = {$\mathrm{NMSE}$},
				legend style={font=\footnotesize,
                  draw=none,					
                  overlay,at={(-0.3657,-.45)},	
                  anchor=north west},			
                ]
                \pgfplotstableread{SPECTRUM2MSEv10.dat}\loadedtable
                \addplot[black,very thick] table [x=nlist, y=NMSEtheory]{\loadedtable};
                \addplot[blue,mark=o,thick] table [x=nlist, y=NMSEell1]{\loadedtable};
                \addplot[red,mark=triangle,thick] table [x=nlist, y=NMSEell2]{\loadedtable};
                \addplot[gray,mark=square,thick] table [x=nlist, y=NMSEell3]{\loadedtable};
                \addplot[brown,mark=star,thick] table [x=nlist, y=NMSElwe]{\loadedtable};
                \legend{Theory;,Ell1;,Ell2;,Ell3;,LW};
            \end{axis}
        \end{tikzpicture}
		\hfill
        \begin{tikzpicture}[scale=1]
            \begin{axis}[mystyle,
                xlabel = {$n$},
                ylabel = {$\hat\beta$},
                ]
                \pgfplotstableread{SPECTRUM2BETAv10.dat}\loadedtable
                \addplot[black,very thick] table [x=nlist, y=BETAtheory]{\loadedtable};
                \addplot[blue,mark=o,thick] table [x=nlist, y=meanBETAell1]{\loadedtable};
                \addplot[red,mark=triangle,thick] table [x=nlist, y=meanBETAell2]{\loadedtable};
                \addplot[gray,mark=square,thick] table [x=nlist, y=meanBETAell3]{\loadedtable};
                \addplot[brown,mark=star,thick] table [x=nlist, y=meanBETAlwe]{\loadedtable};
            \end{axis}
        \end{tikzpicture}
    }
    \vspace{2em}
    \caption{The covariance matrix $\M$ has 30 eigenvalues equal to $100$,
        $40$ eigenvalues equal to $1$, and 30 eigenvalues equal to $0.01$
        ($\pdim=100$).}\label{fig:tengsim2}
\end{figure}

In the last synthetic simulation study, the setup is otherwise similar to
the AR(1) setup in Subsection~\ref{sec:AR1}, but now the sample size is held
constant at $n=10$ and the degrees of freedom of the $t_\nu$-distribution of
the samples is varied from $\nu = 8$ up to $\nu = 1000$. The results are
shown in Figure~\ref{fig:tails}. One can observe that the Ell3-RSCM
estimator is able to attain the lowest empirical NMSE among all of the
estimators.

\begin{figure}[t]
	\subfloat{%
        \begin{tikzpicture}[scale=1]
            \begin{loglogaxis}[Ell3style,
                    xlabel = {$\nu$},
                    ylabel = {$\mathrm{NMSE}$},
                    xtick  = {10,15,25,50,100,500,1000},
                    ytick  = {0.3,0.4,0.5,0.6,0.7},
					every axis y label/.style={%
						at={(rel axis cs:-0.147,0.5)},
						rotate=90,
						font=\footnotesize},
                ]
                \pgfplotstableread{TAILSMSEn10p100.dat}\loadedtable
                \addplot[black,very thick] table [x=vlist, y=NMSEtheory]{\loadedtable};
                \addplot[blue,mark=o,thick] table [x=vlist, y=NMSEell1]{\loadedtable};
                \addplot[red,mark=triangle,thick] table [x=vlist, y=NMSEell2]{\loadedtable};
                \addplot[gray,mark=square,thick] table [x=vlist, y=NMSEell3]{\loadedtable};
                \addplot[brown,mark=star,thick] table [x=vlist, y=NMSElwe]{\loadedtable};
            \end{loglogaxis}
        \end{tikzpicture}
	}
	\hfill
	\subfloat{%
        \begin{tikzpicture}[scale=1]
            \begin{loglogaxis}[Ell3style,
                    xlabel = {$\nu$},
                    ylabel = {$\hat\beta$},
                    xtick  = {10,15,25,50,100,500,1000},
                    ytick  = {0,0.03,0.06,0.12,0.24},
					every axis y label/.style={%
						at={(rel axis cs:-0.1362,0.5)},
						rotate=90,
						font=\footnotesize},
                ]
                \pgfplotstableread{TAILSBETAn10p100.dat}\loadedtable
                \addplot[black,very thick] table [x=vlist, y=BETAtheory]{\loadedtable};
                \addplot[blue,mark=o,thick] table [x=vlist, y=meanBETAell1]{\loadedtable};
                \addplot[red,mark=triangle,thick] table [x=vlist, y=meanBETAell2]{\loadedtable};
                \addplot[gray,mark=square,thick] table [x=vlist, y=meanBETAell3]{\loadedtable};
                \addplot[brown,mark=star,thick] table [x=vlist, y=meanBETAlwe]{\loadedtable};
                \legend{Theory;,Ell1;,Ell2;,Ell3;,LW};
            \end{loglogaxis}
        \end{tikzpicture}
	}
    \vspace{1em}
    \caption{The NMSE and $\hat\beta$ as the $t_{\nu}$-distribution changes
    with $\nu = 8$, $10$, $12$, $15$, $25$, $50$, $100$, $500$, and $1000$.
    The plots are in log-log scale.}\label{fig:tails}
\end{figure}
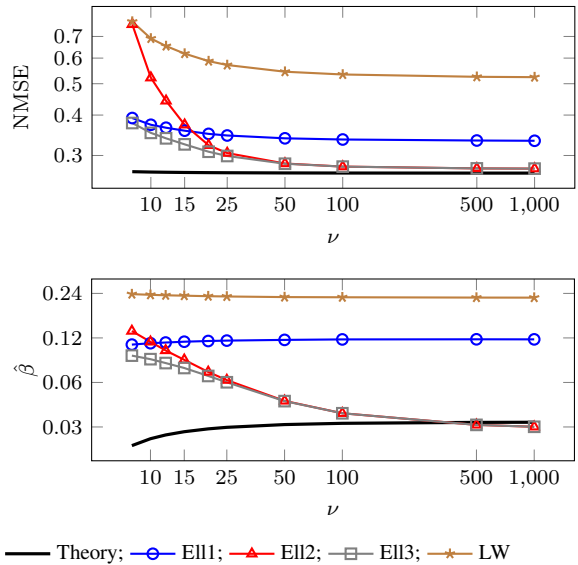

From these simulations, we can conclude that the Ell1-RSCM estimator is
better suited for heavier-tailed distributions than the Ell2-RSCM estimator,
which then again works well for Gaussian or close to Gaussian distributions.
The Ell3-RSCM estimator is, however, able to perform the best in all of the
cases. This is due to the fact that it has the freedom of choosing among two
different estimates of the sphericity; and in the conducted
simulations, the rule choosing the smaller estimate of the sphericity
turns out to work well. In the synthetic simulations, all three
proposed estimators outperformed the LW-RSCM estimator apart from the 
Ell2-RSCM estimator in the case of $t_7$-distributed samples.

\section{Data-analysis examples} \label{sec:real_data}

\subsection{Regularized QDA}

Suppose there are $\kdim$ different $\pdim$-variate populations with
covariance matrix $\M_k \in \PDHm{\pdim}$ and a mean vector $\bom \mu_k
\in\mathbb{R}^{\pdim}$, $k=1,\ldots, \kdim$. The problem is to classify an
observation $\x \in \R^\pdim$ to one of the populations.  We assume no
knowledge of the class prior probabilities. In quadratic discriminant
analysis (QDA) classification, a new observation $\x$ is assigned to class
$\hat k$ by the rule
\begin{align*}
    \hat k = \underset{k \in \{1, \ldots, \kdim\}}
    {\arg \min}{(\x-\bmu_k)}^\top {\M_k}^{-1} (\x-\bmu_k) + \log|\M_k|.
\end{align*}
Commonly, $\bmu_k$ and $\M_k$ are estimated by the sample mean vectors $\bar
\x_k$ and the SCMs $\S_k$ computed from  the \emph{training dataset}
$\X=(\x_1 \, \cdots \, \x_n)$, which consists of $n_k$ observations from
each class \mbox{$k=1,\ldots,\kdim$} and where \mbox{$n = n_1 + \cdots +
n_k$} denotes the total sample size.  In linear discriminant analysis (LDA),
one assumes that the class covariance matrices are equal, so $\M = \M_k$ for
each $k=1,\ldots, \kdim$. Then, the unknown common covariance matrix is
estimated by the \emph{pooled} SCM defined as
\[
    \S_{\mathrm{pool}}
    = \sum_{k=1}^\kdim \frac{\ndim_k-1}{\ndim - \kdim} \, \S_k.
\]
The benefit of LDA over QDA is that it can also be applied in the case when
$n_k < \pdim$ (for any $k$) as long as $n = \sum_k n_k > \pdim$. In this
case, QDA is no longer applicable since the SCM $\S_k$ is not invertible for
$n_k < \pdim$.
LDA can be viewed as a regularized form of QDA since it decreases the
variance of $\S_k$ by using the pooled SCM.\ LDA can often have superior
performance over QDA, especially in small-sample settings. 

Since the performance of LDA and QDA classification rules are highly
dependent on the covariance matrix estimates, in order to reduce the
misclassification rate, a popular approach is to use RSCM estimators
instead of the class sample covariance matrices; see
e.g.,~\cite{friedman1989regularized}. RSCMs can be applied to LDA and QDA
regardless of what the available sample sizes $n_k$ of the classes are.
Here, we use a regularized version of QDA and LDA, where we estimate the
means by $\bar \x_k$, but use Ell1-RSCM, Ell2-RSCM, or LW-RSCM in place of
the unknown covariance matrices $\M_k$ in QDA and $\M$ in case of LDA. Such
approach is referred to as regularized discriminant analysis
(RDA)~\cite{friedman1989regularized}.

We compute the misclassification rates of LDA and QDA and different RDA
methods for the phoneme dataset~\cite{ESLbook:2001}. The original data
consists of short speech frames of 32 msec duration (512 samples with at a
16kHz sampling rate) and each frame represents one of the following
phonemes, ``aa'', ``ao'',  ``dcl'', ``iy'',  or ``sh'' with the number of
occurrencies $695$, $1022$, $757$, $1163$, and $872$, respectively. The
full data set consists of $4509$ speech frames spoken by 50 different male
speakers. The data used for classification consists of the log-periodograms
of the speech frames measured at $p=256$ distinct frequencies. The goal is
to classify the spoken phonemes. 

In the simulations, we randomly split the dataset into a training set and
test set with the ratio 1:12. Then the sizes of the training sets were close
to or smaller than the dimension $\pdim$ as this is the regime where
regularization is needed the most.  The frequencies of phonemes in the
training set were $53$, $79$, $58$, $89$, and $67$, respectively, while the
remaining dataset was used as a test set. The full length of the training
data was $N= \sum_k n_k =346 > \pdim=256$, and thus, the conventional LDA
could be applied but the QDA could not be used as $n_k < p$. The
misclassification rates were calculated by classifying the observations from
the test set using the classification rule estimated from the training set.
We report the corresponding misclassification rates based on 50 repetitions
of random splits of the full data set into test sets and training sets. 
\begin{figure}[t!]
    \centering
		\begin{tikzpicture}
	\begin{axis}[myboxplotstyle]
	\addplot+[boxplot prepared={lower whisker = 0.142926,
		lower quartile = 0.160461,
		median = 0.167788,
		upper quartile = 0.179918,
		upper whisker = 0.208503}]
coordinates {};
	\addplot+[boxplot prepared={lower whisker = 0.110497,
		lower quartile = 0.121787,
		median = 0.128633,
		upper quartile = 0.139803,
		upper whisker = 0.151814}]
		table[row sep=\\,y index=0] {0.185683 \\};
	\addplot+[boxplot prepared={lower whisker = 0.112179,
		lower quartile = 0.135239,
		median = 0.143646,
		upper quartile = 0.157819,
		upper whisker = 0.177036},blue]
		table[row sep=\\,y index=0] {0.198174 \\0.207543 \\0.212347 \\};
	\addplot+[boxplot prepared={lower whisker = 0.119625,
		lower quartile = 0.142445,
		median = 0.152054,
		upper quartile = 0.173433,
		upper whisker = 0.217391},blue]
		table[row sep=\\,y index=0] {0.222436 \\};
	\addplot+[boxplot prepared={lower whisker = 0.126591,
		lower quartile = 0.173673,
		median = 0.186644,
		upper quartile = 0.206342,
		upper whisker = 0.250540},blue,solid]
		table[row sep=\\,y index=0] {0.255825 \\0.257266 \\0.265434 \\0.265914 \\};
	\addplot+[boxplot prepared={lower whisker = 0.088638,
		lower quartile = 0.095604,
		median = 0.099568,
		upper quartile = 0.105213,
		upper whisker = 0.114821},blue,solid]
		table[row sep=\\,y index=0] {0.120106 \\};
	\addplot+[boxplot prepared={lower whisker = 0.092481,
		lower quartile = 0.101369,
		median = 0.105693,
		upper quartile = 0.114581,
		upper whisker = 0.131876},blue,solid]
coordinates {};
	\addplot+[boxplot prepared={lower whisker = 0.094163,
		lower quartile = 0.101369,
		median = 0.106173,
		upper quartile = 0.114821,
		upper whisker = 0.132116},blue,solid]
coordinates {};
	\addplot+[boxplot prepared={lower whisker = 0.094883,
		lower quartile = 0.102570,
		median = 0.108576,
		upper quartile = 0.116262,
		upper whisker = 0.135719},blue,solid]
coordinates {};
	\end{axis}
\end{tikzpicture}
		\caption{Phoneme data: Box plots of the test misclassification rates
		of the conventional LDA compared with the regularized QDA and LDA
		methods based on different RSCM
		estimators.}\label{fig:boxplot_phoneme}
\end{figure}
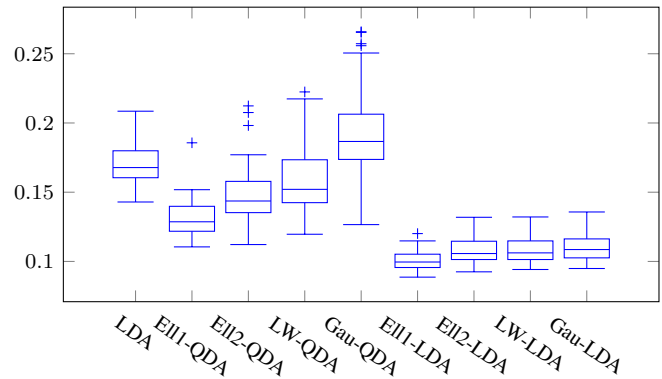
The boxplots of the test misclassification rates given in
Figure~\ref{fig:boxplot_phoneme} compare the conventional LDA with
regularized QDA and regularized LDA. Here we also compare the performance
of the Ell-RSCM estimators to an estimator that presumes Gaussianity ($\ka=0$)
and uses the shrinkage parameter estimate $\hat \be_o^{\textup{Gau}}$ as in
\eqref{eq:beta0Gau} and the estimate of the sphericity $\hat
\gam^{\textup{Ell2}}$ in place of the unknown $\gamma$. 

Several conclusions can be drawn from Figure~\ref{fig:boxplot_phoneme}.
First, the regularized LDA rules that used Ell-RSCM or LW-RSCM outperformed
the LDA with a significant margin: the median test errors of the regularized
LDA (resp. regularized QDA) methods based on Ell1-, Ell2-, and LW-RSCM
were  9.96\%,  10.57\%, and  10.62\%. (resp. 12.86\%,   14.36\%,
and 15.21\%) which may be compared with the 16.8\% median error rate of the
conventional LDA.  Second, the overall performance of the regularized LDA
methods was better than the performance of the regularized QDA methods.
Third, in all cases, both Ell1-RSCM and Ell2-RSCM outperformed LW-RSCM, and
again, Ell1-RSCM had the best performance among all methods. Fourth, we
notice that the Gau-RSCM estimator which presumes Gaussianity (and thus uses
$\ka=0$) is not able to perform better than the other RSCM estimators. In
fact, Gau-RSCM had the worst performance among all methods when applied to
the QDA rule.  This illustrates the fact that the Gaussianity assumption is
a poor approximation of reality for many real data analysis problems. 

\subsection{Portfolio optimization}
 
\begin{figure}[t]
	\centering
	\subfloat[HSI for Jan. 4, 2010 to Dec. 24, 2011.]{%
        \begin{tikzpicture}[scale=1]
		 \begin{axis}[mystyleportfolio,
                xlabel = {$\ndim$},
                ylabel = {Realized risk},
				xtick  = {50,100,150},
				ymax   = .13,
				scaled y ticks=base 10:1,
                ]
                \pgfplotstableread{minvarportfolioresultsRISK2010.dat}\loadedtable
                \addplot[blue,mark=o,thick] table [x=n, y=RiskEll1]{\loadedtable};
                \addplot[red,mark=triangle,thick] table [x=n, y=RiskEll2]{\loadedtable};
                \addplot[gray,mark=square,thick] table [x=n, y=RiskEll3]{\loadedtable};
                \addplot[brown,mark=star,thick] table [x=n, y=RiskLWE]{\loadedtable};
                \addplot[black,mark=diamond,thick] table [x=n, y=RiskROB]{\loadedtable};
              
            \end{axis}
		\end{tikzpicture}
	\hfill
        \begin{tikzpicture}[scale=1]
		 \begin{axis}[mystyleportfolio,
                xlabel = {$\ndim$},
                ylabel = {$\hat\beta$},
				xtick  = {50,100,150},
                ]
                \pgfplotstableread{minvarportfolioresultsBETA2010.dat}\loadedtable
                \addplot[blue,mark=o,thick] table [x=n, y=BETAEll1]{\loadedtable};
                \addplot[red,mark=triangle,thick] table [x=n, y=BETAEll2]{\loadedtable};
                \addplot[gray,mark=square,thick] table [x=n, y=BETAEll3]{\loadedtable};
                \addplot[brown,mark=star,thick] table [x=n, y=BETALWE]{\loadedtable};
            \end{axis}
		\end{tikzpicture}
	}
	\hfill
	\subfloat[HSI for Jan 1, 2016 to Dec. 27, 2017.]{%
        \begin{tikzpicture}[scale=1]
		 \begin{axis}[mystyleportfolio,
                xlabel = {$\ndim$},
                ylabel = {Realized risk},
				legend style={font=\footnotesize,
                  draw=none,					
                  overlay,at={(-0.375,-.45)},	
                  anchor=north west},			
                ]
                \pgfplotstableread{minvarportfolioresultsRISK2016.dat}\loadedtable
                \addplot[blue,mark=o,thick] table [x=n, y=RiskEll1]{\loadedtable};
                \addplot[red,mark=triangle,thick] table [x=n, y=RiskEll2]{\loadedtable};
                \addplot[gray,mark=square,thick] table [x=n, y=RiskEll3]{\loadedtable};
                \addplot[brown,mark=star,thick] table [x=n, y=RiskLWE]{\loadedtable};
                \addplot[black,mark=diamond,thick] table [x=n, y=RiskROB]{\loadedtable};
				\legend{Ell1;,Ell2;,Ell3;,LW;,Rob}
            \end{axis}
		\end{tikzpicture}
	\hfill
        \begin{tikzpicture}[scale=1]
		 \begin{axis}[mystyleportfolio,
                xlabel = {$\ndim$},
                ylabel = {$\hat\beta$},
				ytick = {0.8,0.9},
                ]
                \pgfplotstableread{minvarportfolioresultsBETA2016.dat}\loadedtable
                \addplot[blue,mark=o,thick] table [x=n, y=BETAEll1]{\loadedtable};
                \addplot[red,mark=triangle,thick] table [x=n, y=BETAEll2]{\loadedtable};
                \addplot[gray,mark=square,thick] table [x=n, y=BETAEll3]{\loadedtable};
                \addplot[brown,mark=star,thick] table [x=n, y=BETALWE]{\loadedtable};
            \end{axis}
		\end{tikzpicture}
	}
\vspace{2em}
\caption{Annualized realized portfolio risk and average $\hat \beta$
achieved out-of-sample for a portfolio consisting of $p=45$ stocks in HSI
for Jan. 4, 2010 to Dec. 24, 2011 (upper panel); and $p=50$ stocks for Jan.
1, 2016 to Dec. 27, 2017 (lower panel). Both time-series contain 491 trading
days.  The portfolio allocations are estimated by GMVP using different
RSCM estimators and different training window lengths $n$.  The
method of~\cite{yang2015robust} that uses a robust regularized covariance
estimator is also included and referred to as Rob.}
\label{fig:risk}
\end{figure}
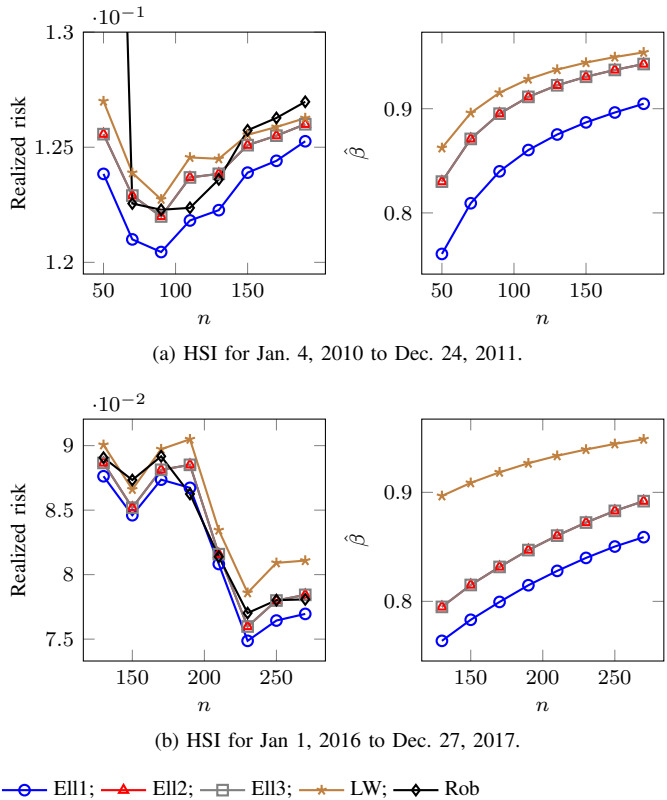

Portfolio selection and optimization is one of the most important topics in
investment theory. It is a mathematical framework wherein one seeks
portfolio allocations which balance the return-risk tradeoff such that it
satisfies the investor's needs. Some historical key references are
\cite{markowitz1952portfolio, markowitz1959portfolio, tobin1958liquidity,
sharpe1964capital}, and~\cite{lintner1965valuation}.

Consider a portfolio consisting of $\pdim$ assets. The objective is to find
optimal portfolio weights which determine the proportion of wealth that is
to be invested in each particular stock. That is, a fraction $w_i \in \R$ of
the total wealth is invested in the $i$th asset, $i=1,\ldots,\pdim$, and the
portfolio $P$ with $\pdim$ assets is described by the portfolio
\emph{weight} or \emph{allocation vector} $\w \in \R^\pdim$ which satisfies
the constraint $\bo 1^\top \w = 1$. The \emph{global mean variance
portfolio} (GMVP) aims at finding the weight vector that minimizes the
portfolio variance (risk or volatility), and hence does not require
specifying the mean vector. Let $\bo r_t \in \R^\pdim$ denote the net
returns of the $\pdim$ assets at time $t$. The GMVP optimization problem is 
\[ 
    \underset{\w \in \R^\pdim}{\mathrm{minimize}} \  \w^\top \M \w   \quad
    \mbox{subject to }  \quad \bo 1^\top \w = 1, 
\]  
where $\bo 1$  denotes is a $\pdim$-vector of ones and $\M$ denotes the
covariance matrix of the vector $\bo r_t$ of returns. The problem is
straight-forward to solve and the well-known solution is 
\begin{equation} \label{eq:w_GMVP} 
    \w_{o} = \frac{\M^{-1} \bo 1}{\bo 1^\top \M^{-1} \bo 1}.
\end{equation}
Naturally, the covariances of the net returns cannot be foreseen, and hence,
the covariance matrix needs to be estimated from the historical data. Next,
we apply the proposed RSCM estimators in the GMVP optimization application. 

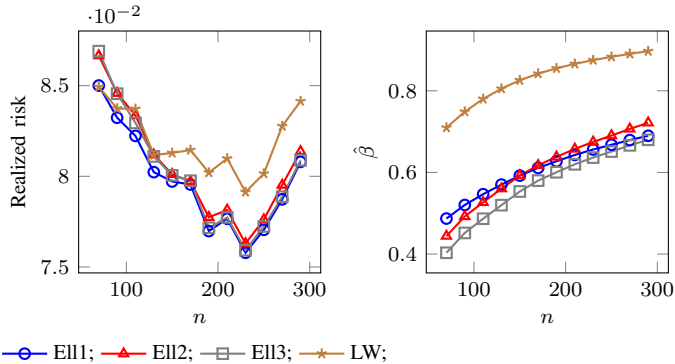
\begin{figure}[t]
	\subfloat{%
        \begin{tikzpicture}[scale=1]
		 \begin{axis}[mystyle,
                xlabel = {$\ndim$},
                ylabel = {Realized risk},
				legend style={font=\footnotesize,
                  draw=none,					
                  overlay,at={(-0.355,-.25)},	
                  anchor=north west},			
                ]
                \pgfplotstableread{SP500resultsRISK.dat}\loadedtable
                \addplot[blue,mark=o,thick] table [x=n, y=RiskEll1]{\loadedtable};
                \addplot[red,mark=triangle,thick] table [x=n, y=RiskEll2]{\loadedtable};
                \addplot[gray,mark=square,thick] table [x=n, y=RiskEll3]{\loadedtable};
                \addplot[brown,mark=star,thick] table [x=n, y=RiskLWE]{\loadedtable};
				\legend{Ell1;,Ell2;,Ell3;,LW;}
            \end{axis}
		\end{tikzpicture}
	\hfill
        \begin{tikzpicture}[scale=1]
		 \begin{axis}[mystyle,
                xlabel = {$\ndim$},
                ylabel = {$\hat\beta$},
                ]
                \pgfplotstableread{SP500resultsBETA.dat}\loadedtable
                \addplot[blue,mark=o,thick] table [x=n, y=BETAEll1]{\loadedtable};
                \addplot[red,mark=triangle,thick] table [x=n, y=BETAEll2]{\loadedtable};
                \addplot[gray,mark=square,thick] table [x=n, y=BETAEll3]{\loadedtable};
                \addplot[brown,mark=star,thick] table [x=n, y=BETALWE]{\loadedtable};
            \end{axis}
		\end{tikzpicture}
	}
\vspace{1em}
\caption{Annualized realized portfolio risk achieved out-of-sample over 583
    trading days  for a portfolio consisting of  $p=396$ stocks in S\&P 500
    index  for Jan. 4, 2016 to to Apr. 27, 2018. The portfolio allocations
    are estimated  by  GMVP using different  regularized SCM estimators and
    different training window length $n$. The right panel shows average
    $\hat \beta$ for different training windows lengths for RSCM estimators.}
    \label{fig:risk2}
\end{figure}

We investigate the out-of-sample portfolio performance of different
RSCM estimators. In particular, we use the divident adjusted
daily closing prices downloaded from the Yahoo! Finance
(\url{http://finance.yahoo.com}) database to obtain the net returns for 50
stocks that are currently included in the Hang Seng Index (HSI) for two
different time periods, from Jan. 4, 2010 to Dec. 24, 2011, and from Jan. 1,
2016 to Dec. 27, 2017  (excluding the weekends and public holidays).  In
both cases, the time series contain $T=491$ trading days. For the first
period (2010-2011), we had full length time series for only $p=45$ stocks,
whereas in the latter case we had full length time series for all stocks, so
$p=50$.  Our third time series contains the net returns of $p=396$  stocks
from Standard and Poor's 500 (SP500) index for the time period from Jan. 4,
2016 to Apr. 27, 2018 (excluding the weekends and public holidays). In this
case, the time series contains $T=583$ trading days.

At a particular day $t$, we used the previous $n$ days (i.e., from $t-n$ to
$t-1$) as the training window to estimate the covariance matrix, and the
portfolio weight vector. The obtained weight vector $\hat \w_0$ was then used
to compute the portfolio returns for the following 20 days. Next, the window
was shifted 20 trading days forward, a new weight vector was computed, and the
portfolio returns for another 20 days were computed. Hence, this scenario
corresponds to the case that the portfolio manager holds the assets for
approximately a month (20 trading days), after which they are liquidated and
new weights are computed. In this manner, we obtained $T-n$ daily returns from
which the realized risk was computed as the sample standard deviation of the
obtained portfolio returns. To obtain the annualized realized risk, the
sample standard deviations of the daily returns were multiplied by
$\sqrt{250}$. In our tests, different training window lengths $n$ were
considered. Figure~\ref{fig:risk} depicts the annualized  realized risks for the
different RSCM estimators for both time periods of the HSI data.  We also
included in our study the robust GMVP weight estimator proposed in
\cite{yang2015robust} that uses a robust regularized Tyler's $M$-estimator
with a tuning parameter selection that is optimized for the GMVP problem.
In~\cite{yang2015robust}, it was illustrated that their estimator outperforms a
large array of regularized covariance matrix estimators both for simulated
and real financial data. 

As can be seen from Figure~\ref{fig:risk}, for period 2010-2011, the
Ell1-RSCM (and Ell3-RSCM) estimator achieved the smallest realized risk,
outperforming all the other estimators for all window lengths. The robust
method of~\cite{yang2015robust} performed slightly better than the Ell2-RSCM
estimator only for certain window lengths ($n=70$ and $n=110$), but it was
also the worst method for a very small window length ($n=50$). For period
2016-2017, the differences between the estimators were not as large as in
the period 2010-2011.  Here we observed that for some window lengths, the
Ell1- and the Ell2-RSCM estimators and the robust method
of~\cite{yang2015robust} had rather identical behaviour (e.g., when
$n=210$).  Overall, however, the Ell1-RSCM method was the best performing
method. 

Finally, we wish to point out that while Ell1-RSCM was observed to have the
best performance in general, also the Ell2-RSCM estimator outperformed the
LW-RSCM over the entire span of the estimation windows considered for both
periods. Also, note that the optimal training window length which yielded
the smallest realized risk was $n=90$ for the period 2010-2011, but much
larger ($n=230$) for the period 2016-2017. This could be explained by the
fact that the stock markets were more turbulent in the first period, and
hence, the realized risks were much higher. 

Figure~\ref{fig:risk2} depicts the annualized realized risks for the
different RSCM estimators for the time period from Jan. 4, 2016 to Apr. 27,
2018 of the SP500 data. We have excluded the method of~\cite{yang2015robust}
from this study as it is not well suited for very high-dimensional problems
because of its large computational cost due to the grid search method it
uses in finding the optimal tuning parameter. 

With the SP500 data, Ell1-RSCM achieved the smallest realized risk and
outperformed the other estimators for all training window lengths $n$. The
Ell2-RSCM estimator outperformed LW-RSCM when $n \geq 130$. The Ell3-RSCM
estimator had similar performance as Ell2-RSCM when $n \leq 170$ and it
performed similar to Ell1-RSCM for $n \geq 170$. The optimal training window
length which produced the smallest realized risk was $n=230$ for all
methods.  Note that, the same result was achieved with the HSI data for the
period 2016-2017.

\section{Conclusion}\label{sec:conclusions}
This paper proposed a regularized sample covariance matrix (RSCM) estimator
Ell-RSCM, which is suitable for high-dimensional problems, where the data
can be considered as generated from an unknown elliptically symmetric
distribution. The proposed estimator is based on the estimation of the optimal
shrinkage parameters which minimizes the mean squared error. The estimation of
the optimal shrinkage parameters was shown to reduce to a simpler problem of
estimating three statistical population parameters: the scale $\eta$, the
sphericity $\gamma$, and the elliptical kurtosis $\ka$. The paper showed
alternative ways of how to estimate these parameters under the random matrix
regime. In the construction of the proposed estimator Ell-RSCM, elliptical
distribution theory was used in the derivation of the analytical form of the
mean squared error of the SCM.\ The conducted synthetic simulation studies
showed an advantage of using the proposed Ell-RSCM estimator over the widely
popular Ledoit-Wolf (LW-RSCM) estimator. 
 Furthermore, we tested the
performance of the proposed Ell-RSCM estimator using real data in a
classification problem and  in a
portfolio optimization problem, wherein the proposed methods were able to
outperform the benchmark methods. MATLAB codes of the proposed Ell-RSCM methods and codes and 
 datasets to reproduce the results of real data-analysis examples of Section~\ref{sec:real_data} are available at 
\url{http://users.spa.aalto.fi/esollila/regscm/}.

\ifCLASSOPTIONcaptionsoff
\newpage
\fi


\begin{thebibliography}{10}
\providecommand{\url}[1]{#1}
\csname url@samestyle\endcsname
\providecommand{\newblock}{\relax}
\providecommand{\bibinfo}[2]{#2}
\providecommand{\BIBentrySTDinterwordspacing}{\spaceskip=0pt\relax}
\providecommand{\BIBentryALTinterwordstretchfactor}{4}
\providecommand{\BIBentryALTinterwordspacing}{\spaceskip=\fontdimen2\font plus
\BIBentryALTinterwordstretchfactor\fontdimen3\font minus
  \fontdimen4\font\relax}
\providecommand{\BIBforeignlanguage}[2]{{%
\expandafter\ifx\csname l@#1\endcsname\relax
\typeout{** WARNING: IEEEtran.bst: No hyphenation pattern has been}%
\typeout{** loaded for the language `#1'. Using the pattern for}%
\typeout{** the default language instead.}%
\else
\language=\csname l@#1\endcsname
\fi
#2}}
\providecommand{\BIBdecl}{\relax}
\BIBdecl

\bibitem{ledoit2002some}
O.~Ledoit, M.~Wolf \emph{et~al.}, ``Some hypothesis tests for the covariance
  matrix when the dimension is large compared to the sample size,'' \emph{The
  Annals of Statistics}, vol.~30, no.~4, pp. 1081--1102, 2002.

\bibitem{ledoit_wolf:2004}
O.~Ledoit and M.~Wolf, ``{A well-conditioned estimator for large-dimensional
  covariance matrices},'' \emph{Journal of Multivariate Analysis}, vol.~88, pp.
  365--411, 2004.

\bibitem{abramovich2007diagonally}
Y.~I. Abramovich and N.~K. Spencer, ``{Diagonally Loaded Normalised Sample
  Matrix Inversion (LNSMI) for Outlier-resistant Adaptive Filtering},'' in
  \emph{IEEE International Conference on Acoustics, Speech, and Signal
  Processing (ICASSP)}, 2007.

\bibitem{ollila2014regularized}
E.~Ollila and D.~E. Tyler, ``{Regularized $M$-Estimators of Scatter Matrix},''
  \emph{IEEE Transactions on Signal Processing}, vol.~62, no.~22, pp.
  6059--6070, 2014.

\bibitem{zhang2016automatic}
T.~Zhang and A.~Wiesel, ``{Automatic Diagonal Loading for Tyler's Robust
  Covariance Estimator},'' in \emph{IEEE Workshop on Statistical Signal
  Processing (SSP)}, 2016.

\bibitem{sun2014regularized}
Y.~Sun, P.~Babu, and D.~P. Palomar, ``{Regularized Tyler's Scatter Estimator:
  Existence, Uniqueness, and Algorithms},'' \emph{IEEE Transactions on Signal
  Processing}, vol.~62, no.~19, pp. 5143--5156, 2014.

\bibitem{chen2010shrinkage}
Y.~Chen, A.~Wiesel, Y.~C. Eldar, and A.~O. Hero, ``{Shrinkage Algorithms for
  MMSE Covariance Estimation},'' \emph{IEEE Transactions on Signal Processing},
  vol.~58, no.~10, pp. 5016--5029, 2010.

\bibitem{couillet2014large}
R.~Couillet and M.~McKay, ``{Large dimensional analysis and optimization of
  robust shrinkage covariance matrix estimators},'' \emph{Journal of
  Multivariate Analysis}, vol. 131, pp. 99--120, 2014.

\bibitem{pascal2014generalized}
F.~Pascal, Y.~Chitour, and Y.~Quek, ``{Generalized Robust Shrinkage Estimator
  and Its Application to STAP Detection Problem},'' \emph{IEEE Transactions on
  Signal Processing}, vol.~62, no.~21, pp. 5640--5651, 2014.

\bibitem{friedman1989regularized}
J.~H. Friedman, ``{Regularized Discriminant Analysis},'' \emph{Journal of the
  American Statistical Association}, vol.~84, no. 405, pp. 165--175, 1989.

\bibitem{carlson1988covariance}
B.~D. Carlson, ``{Covariance Matrix Estimation Errors and Diagonal Loading in
  Adaptive Arrays},'' \emph{IEEE Transactions on Aerospace and Electronic
  Systems}, vol.~24, no.~4, pp. 397--401, 1988.

\bibitem{abramovich1981controlled}
Y.~Abramovich, ``{Controlled Method for Adaptive Optimization of Filters Using
  the Criterion of Maximum Signal-to-Noise Ratio},'' \emph{Radio Eng. Elect.
  Phys.}, vol.~26, no.~3, pp. 87--95, 1981.

\bibitem{auguin2016robust}
N.~Auguin, D.~Morales-Jimenez, M.~McKay, and R.~Couillet, ``{Robust Shrinkage
  $M$-Estimators of Large Covariance Matrices},'' in \emph{IEEE Workshop on
  Statistical Signal Processing (SSP)}, 2016.

\bibitem{muirhead:1982}
R.~J. Muirhead, \emph{Aspects of Multivariate Statistical Theory}.\hskip 1em
  plus 0.5em minus 0.4em\relax Wiley, 1982.

\bibitem{fang1990symmetric}
K.-T. Fang, S.~Kotz, and K.-W. Ng, \emph{Symmetric Multivariate and Related
  Distributions}.\hskip 1em plus 0.5em minus 0.4em\relax Chapman and Hall/CRC,
  1990.

\bibitem{ollila2012complex}
E.~Ollila, D.~E. Tyler, V.~Koivunen, and H.~V. Poor, ``{Complex Elliptically
  Symmetric Distributions: Survey, New Results and Applications},'' \emph{IEEE
  Transactions on Signal Processing}, vol.~60, no.~11, pp. 5597--5625, 2012.

\bibitem{magnus_neudecker:1999}
J.~R. Magnus and H.~Neudecker, \emph{Matrix Differential Calculus with
  Applications in Statistics and Econometrics}, 2nd~ed.\hskip 1em plus 0.5em
  minus 0.4em\relax Wiley, 1999.

\bibitem{srivastava2005some}
M.~S. Srivastava, ``{Some Tests Concerning the Covariance Matrix in High
  Dimensional Data},'' \emph{Journal of the Japan Statistical Society},
  vol.~35, no.~2, pp. 251--272, 2005.

\bibitem{tyler1982radial}
D.~E. Tyler, ``{Radial estimates and the test for sphericity},''
  \emph{Biometrika}, vol.~69, no.~2, pp. 429--436, 1982.

\bibitem{ollila2017optimal}
E.~Ollila, ``{Optimal High-Dimensional Shrinkage Covariance Estimation for
  Elliptical Distributions},'' in \emph{25th European Signal Processing
  Conference (EUSIPCO)}, 2017.

\bibitem{joanes1998comparing}
D.~N. Joanes and C.~A. Gill, ``{Comparing Measures of Sample Skewness and
  Kurtosis},'' \emph{Journal of the Royal Statistical Society: Series D (The
  Statistician)}, vol.~47, no.~1, pp. 183--189, 1998.

\bibitem{bentler1986greatest}
P.~M. Bentler and M.~Berkane, ``{Greatest lower bound to the elliptical theory
  kurtosis parameter},'' \emph{Biometrika}, vol.~73, no.~1, pp. 240--241, 1986.

\bibitem{tabassum2018compressive}
M.~N. Tabassum and E.~Ollila, ``{Compressive Regularized Discriminant Analysis
  of High-Dimensional Data With Applications to Microarray Studies},''
  \emph{IEEE International Conference on Acoustics, Speech, and Signal
  Processing (ICASSP)}, pp. 4204--4208, 2018.

\bibitem{brown1983statistical}
B.~Brown, ``{Statistical Uses of the Spatial Median},'' \emph{Journal of the
  Royal Statistical Society. Series B (Methodological)}, vol.~45, no.~1, pp.
  25--30, 1983.

\bibitem{croux2002sign}
C.~Croux and E.~Ollila, ``{Sign and Rank Covariance Matrices: Statistical
  Properties and Application to Principal Components Analysis},''
  \emph{Statistical Data Analysis Based on the L1-norm and Related Methods},
  pp. 257--269, 2002.

\bibitem{magyar2014asymptotic}
A.~F. Magyar and D.~E. Tyler, ``The asymptotic inadmissibility of the spatial
  sign covariance matrix for elliptically symmetric distributions,''
  \emph{Biometrika}, vol. 101, no.~3, pp. 673--688, 2014.

\bibitem{ESLbook:2001}
T.~Hastie, R.~Tibshirani, and J.~Friedman, \emph{The elements of statistical
  learning}.\hskip 1em plus 0.5em minus 0.4em\relax New York: Springer, 2001.

\bibitem{yang2015robust}
L.~Yang, R.~Couillet, and M.~R. McKay, ``A robust statistics approach to
  minimum variance portfolio optimization,'' \emph{IEEE Transactions on Signal
  Processing}, vol.~63, no.~24, pp. 6684--6697, 2015.

\bibitem{markowitz1952portfolio}
H.~Markowitz, ``{Portfolio Selection},'' \emph{The Journal of Finance}, vol.~7,
  no.~1, pp. 77--91, 1952.

\bibitem{markowitz1959portfolio}
------, \emph{Portfolio Selection, Efficient Diversification of
  Investments}.\hskip 1em plus 0.5em minus 0.4em\relax Wiley, 1959.

\bibitem{tobin1958liquidity}
J.~Tobin, ``{Liquidity Preference as Behavior Towards Risk},'' \emph{The Review
  of Economic Studies}, vol.~25, no.~2, pp. 65--86, 1958.

\bibitem{sharpe1964capital}
W.~Sharpe, ``{Capital Asset Prices: A Theory of Market Equilibrium Under
  Conditions of Risk},'' \emph{The Journal of Finance}, vol.~19, no.~3, pp.
  425--442, 1964.

\bibitem{lintner1965valuation}
J.~Lintner, ``{The Valuation of Risk Assets and the Selection of Risky
  Investments in Stock Portfolios and Capital Budgets},'' \emph{The Review of
  Economics and Statistics}, vol.~47, no.~1, pp. 13--37, 1965.

\end{thebibliography}
\end{document}